\let\complex\mathsf
\let\group\mathbf
\let\union\bigcup
\DeclareMathOperator{\End}{End}
\DeclareMathOperator{\Hom}{\complex{Hom}}
\DeclareMathOperator{\mhom}{mhom}
\DeclareMathOperator{\conv}{conv}
\newcommand{\nothing}{\mathord{\_}}
\newcommand{\geom}[1]{\lvert #1\rvert}   %
\newcommand{\posetA}{P} %
\newcommand{\face}{F}   %
\newcommand{\auto}{\phi} %
\let\georeal\geom
\newcommand{\arityf}{2}
\newcommand{\id}{1}
\begin{document}

\title{A topological proof of the Hell–Nešetřil dichotomy}

\author{Sebastian Meyer}
\address{TU Dresden, Dresden, Germany}
\email{sebastian.meyer2@tu-dresden.de}

\author{Jakub Opršal}
\address{University of Birmingham, Birmingham, UK}
\email{jakub.oprsal@bham.ac.uk}

\thanks{The first author was funded by the European Research Council (Project POCOCOP, ERC Synergy Grant 101071674). Views and opinions expressed are however those of the author(s) only and do not necessarily reflect those of the European Union or the European Research Council Executive Agency. Neither the European Union nor the granting authority can be held responsible for them.
\\\medskip \copyright S.~Meyer, J.~Opršal; licensed under Creative Commons License CC-BY 4.0.}

\begin{abstract}
  We provide a new proof of a theorem of Hell and Nešetřil [J. Comb. Theory B, 48(1):92–110, 1990] using tools from topological combinatorics based on ideas of Lovász [J. Comb. Theory, Ser.~A, 25(3):319–324, 1978]. The Hell–Nešetřil Theorem provides a dichotomy of the graph homomorphism problem. It states that deciding whether there is a graph homomorphism from a given graph to a fixed graph $H$ is in P if $H$ is bipartite (or contains a self-loop), and is NP-complete otherwise. In our proof we combine topological combinatorics with the algebraic approach to constraint satisfaction problem.
\end{abstract}

\maketitle

\subsection*{Acknowledgements}

The first author thanks Manuel Bodirsky for pointing him towards this topic. The second author would like to thank Gianluca Tasinato for his immense patience in explaining the intrigues of algebraic topology. We both thank Roman Gundarin for noticing a mistake in an earlier version of this manuscript.

\section{Introduction}

In their seminal 1990 paper \cite{HellN90}, Hell and Nešetřil studied the complexity of the graph homomorphism problem, also known as the $H$-colouring problem.
The problem fixes a graph $H$, and asks whether there is a graph homomorphism (i.e., an edge-preserving map) from a given graph $G$ to $H$.
Graph homomorphisms naturally generalise graph colouring: a $k$-colouring is the same as a homomorphism to the $k$-element clique $K_k$.
The paper then provides the following complexity dichotomy assuming that $H$ is a finite graph \cite[Theorem~1]{HellN90}.

\begin{theorem} [Hell--Nešetřil] \label{thm:hell-nesetril}
    If $H$ is bipartite or contains a loop, then the $H$-coloring problem is in P.
    If $H$ is not bipartite and contains no loop then the $H$-coloring problem is NP-complete.
\end{theorem}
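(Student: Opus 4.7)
The plan splits into an easy tractability direction and a much harder hardness direction. For tractability, a loop in $H$ yields the constant homomorphism $G \to H$ for every $G$, and if $H$ is bipartite and contains at least one edge, then $G \to H$ exists iff $G$ is bipartite, which is in P. For NP-hardness, I first pass to the core: replacing $H$ by its core preserves $\CSP(H)$, and the core of a non-bipartite loopless graph is again non-bipartite (any retraction preserves odd cycles) and loopless. So I may assume $H$ is a non-bipartite loopless core.

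Next I invoke the algebraic approach to CSPs: for a core relational structure, the CSP is NP-hard whenever the structure admits no Taylor polymorphism. This direction of the algebraic dichotomy predates the full Bulatov–Zhuk theorem and is much more elementary, being a chain of pp-interpretations reducing $3$-SAT (or $\CSP(K_3)$) to $\CSP(H)$. The task thus reduces to a purely algebraic statement: a non-bipartite loopless core graph $H$ admits no \weakTaylor{} polymorphism. After further reducing the arity via Siggers-type arguments, it suffices to rule out a binary \weakTaylor{} polymorphism $f \colon H^{\arityf} \to H$.

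The topological heart of the argument follows Lovász. I attach to $H$ the geometric realisation $\georeal{\Hom(K_2, H)}$, equipped with the $\mathbb{Z}_2$-action that swaps the two vertices of $K_2$. For non-bipartite loopless $H$, this space is nonempty, the action is free, and the $\mathbb{Z}_2$-index is at least $1$. The functoriality of $\Hom(K_2, \nothing)$ converts a polymorphism $f \colon H^2 \to H$ into a continuous $\mathbb{Z}_2$-equivariant map $\georeal{\Hom(K_2, H)}^2 \to \georeal{\Hom(K_2, H)}$, where $\mathbb{Z}_2$ acts diagonally on the source. A \weakTaylor{} identity on $f$ imposes an extra combinatorial symmetry on this map (e.g.\ invariance under swapping the two factors); combining the two symmetries produces an equivariant map into a space whose equivariant topology cannot accommodate it, yielding a contradiction via a Borsuk–Ulam-type theorem.

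I expect the main obstacle to be precisely this last step: formalising how a \weakTaylor{} identity translates into genuine topological symmetry, and calibrating Borsuk–Ulam so that the bare $\mathbb{Z}_2$-index bound on $\georeal{\Hom(K_2, H)}$ is sharp enough to rule out the induced equivariant map uniformly across all non-bipartite loopless cores, rather than only highly symmetric examples such as odd cliques where the classical Lovász bounds are immediately available.
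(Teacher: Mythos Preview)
Your reduction to the algebraic statement ``a non-bipartite loopless core $H$ has no Taylor polymorphism'' is exactly how the paper proceeds, and the tractability side is fine. The divergence, and the genuine gap, is in the topological mechanism you propose for that algebraic statement.

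Two problems. First, the ``Siggers-type'' reduction to a \emph{binary} polymorphism is not available: Siggers gives arity six (or four in the refined version), and in fact deriving Siggers from Taylor already goes through the very statement you are trying to prove. There is no a priori way to cut the arity of a Taylor operation down to two. Second, and more fundamentally, the Borsuk--Ulam/index line does not close. For a general non-bipartite $H$ the $\mathbb Z_2$-index of $\Hom(K_2,H)$ is only guaranteed to be $\geq 1$ (large odd cycles realise index exactly $1$), so there is no uniform index obstruction that an equivariant map coming from a polymorphism would violate. You correctly flag this as the main obstacle; it is not a calibration issue but a structural one: index bounds are simply too coarse here.

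The paper's route avoids both problems. It keeps the full $n$-ary Taylor polymorphism $t$ and pushes it forward to a monotone ``sub-Taylor'' operation on the poset $\mhom(K_2,H)$. The key theorem (a mild extension of Larose--Z\'adori) is that any finite poset admitting such an operation has all homotopy groups trivial; by Whitehead, each component of $\Hom(K_2,H)$ is then \emph{contractible}. Non-bipartiteness is used only to place some edge $(u,v)$ and its flip $(v,u)$ in the same component. Now the flip $\geom\auto$ is a self-map of a contractible finite complex, so the Lefschetz fixed-point theorem produces a fixed point, and a fixed point of the flip forces a self-loop in $H$. So the decisive topological input is not an equivariant obstruction of Borsuk--Ulam type but a fixed-point theorem applied after the Taylor hypothesis has collapsed the homotopy of the complex entirely.
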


This theorem served as a foundation for the Feder--Vardi conjecture \cite{FederV98} which claimed that this dichotomy extends to homomorphism problems of finite relational structures over arbitrary signature; this general problem is known as the \emph{constraint satisfaction problem (CSP)}. The Feder--Vardi conjecture was confirmed by Bulatov \cite{Bulatov17} and Zhuk \cite{Zhuk20}.

In this paper, we provide a new proof of the Hell--Nešetřil theorem using topology.
Topological methods have recently emerged in the study of \emph{approximate graph colouring} problem which is a promise version of the $k$-colouring problem. In its simplest form, we are asked to find a $k$-colouring of a $3$-colourable graph for some (fixed) $k > 3$. Although some version of this problem is studied since 70's \cite{GareyJ76}, surprisingly little is known about the complexity of this problem. The best known NP-hardness is for colouring $3$-colourable graphs with $5$ colours \cite{BartoBKO21}, and the best known polynomial-time algorithm uses $\tilde O(n^{0.19747})$ colors \cite{KawarabayashiTY24}. Topological methods were first used in showing that $3$-colouring $H$-colourable graphs is NP-hard for all non-bipartite $3$-colourable graphs $H$ \cite{KrokhinO19,WrochnaZ20,KrokhinOWZ23}. Recently, similar methods were applied to provide NP-hardness of certain promise colouring of 3-uniform hypergraphs \cite{FilakovskyNOTW24}.
On the other hand, a connection between topology and complexity in the other direction was recently established by Schnider and Weber \cite{SchniderW24} who showed that Boolean CSPs exhibit a topological dichotomy along the same line as a complexity dichotomy provided earlier by Schaefer \cite{Schaefer78}. Our proof provides the tractable side of this dichotomy for graphs, and can be easily generalised to arbitrary (finite) relational structures. The full proof of the general dichotomy (including an earlier proof of the tractability side) was given recently by Meyer~\cite{Meyer24}.

We believe that our proof of the Hell--Nešetřil theorem is conceptually simpler then previously known proofs.
Unlike the original proof, which is purely combinatorial, it requires familiarity with some basics of algebraic topology and universal algebra.
We believe that our proof brings important insights into what makes a CSP tractable, and it brings to the light a new connection with homotopy theory.
In particular, it provides a new necessary condition for tractability of a (finite-template) CSP which is topological rather than algebraic.

\subsubsection*{A sketch of the proof}

Our proof builds on topological combinatorics which started with Lovász's result on the chromatic number of Kneser graphs \cite{Lovasz78}. In his paper, Lovász assigns to each graph a topological space, or more precisely a simplicial complex, called the \emph{neighbourhood complex}. Lovász then shows that if this space is $(k-2)$-connected then the graph cannot be coloured by $k$ colours, and that the neighbourhood complexes of Kneser graphs satisfy this property. In more recent retelling of the proof often a different simplicial complex is used, called the \emph{box complex} which has a natural action of the 2-element group.
We refer to a book by Matoušek \cite{Matousek03} for an accessible overview of these methods.

The second ingredient of the proof is a use of the \emph{algebraic approach to CSP} which connects the complexity of these problems to the algebraic structure of the \emph{polymorphisms} of the template. This general theory was developed over past few decades. Its foundations were established by Jeavons et al.~\cite{JeavonsCG97,BulatovJK05}, and we refer to Barto, Krokhin, and Willard \cite{BartoKW17} for a modern (although pre-dichotomy) overview of this theory.
Polymorphisms of a CSP template are multivariate endomorphisms, i.e., for $H$-colouring, a polymorphism is a homomorphism from $H^n$ to $H$ for some $n$. As consequence of a fundamental theorem of the algebraic approach, we get that $H$-colouring is NP-complete if $H$ does not have a \emph{Taylor polymorphism} \cite[Corollary~42]{BartoKW17}, which is a polymorphism $t$ of arity $n \geq 1$ that satisfies equations of the form:
\[
  t\begin{pmatrix}
    x & {*} & \dots & {*} \\
    {*} & x & \dots & {*} \\
    \vdots & & \ddots & \vdots \\
    {*} & {*} & \dots & x \\
  \end{pmatrix} =
  t\begin{pmatrix}
    y & {*} & \dots & {*} \\
    {*} & y & \dots & {*} \\
    \vdots & & \ddots & \vdots \\
    {*} & {*} & \dots & y \\
  \end{pmatrix}
\]
where $t$ is applied row-wise on each of the matrix, and the unspecified non-diagonal entries (denoted by $*$) are $x$'s or $y$'s in some fixed pattern.
Taylor \cite{Taylor77} introduced this equation in his study of homotopy of topological algebras as the algebraic condition which forces all topological algebras in the variety to have abelian fundamental group.

In the proof we connect the algebraic approach with Lovász's method. The hard implication of the Hell--Nešetřil dichotomy is to show that if $H$ is not bipartite and does not have a self-loop, then $H$-colouring is NP-hard.
We show that if $H$ is not bipartite and has a Taylor polymorphism then it has to have a self-loop, i.e., we provide a new proof of a statement proven earlier by Bulatov~\cite{Bulatov05} which is known to imply the Hell--Nešetřil theorem.
For simplicity, let us assume that $H$ is connected which may be done without loss of generality.
We then consider the box complex $B(H)$ of $H$, and observe that this complex is connected since $H$ is not bipartite.
The Taylor polymorphism of $H$ induces a continuous function $t^*\colon B(H)^n \to B(H)$. This function is enough to invoke Taylor's result to derive that the fundamental group is abelian. Our goal is nevertheless stronger. Namely, we will exploit the finiteness of $H$ to show that all the homotopy groups of $B(H)$ are actually trivial.

Triviality of homotopy groups of finite Taylor posets has been shown by Larose and Zádori \cite{LaroseZ05}. In essence, we use their result in the following way: Each simplicial complex has an associated partial order which is homotopy equivalent. If the simplicial complex is finite, so is its partial order. We then show that the Taylor polymorphism of $H$ is enough to invoke the theorem of Larose and Zádori. Although, unfortunately, it does not immediately induce a Taylor polymorphism of the poset, and hence we have to check that the proof applies in our case. This is the majority of the technical work in this paper.

Finally, after having proved that $B(H)$ is contractible, we invoke a generalisation of Brouwer's fixed-point theorem (a corollary of a theorem of Lefschetz \cite{Lefschetz37}), to derive that the action of the 2-element group has a fixed point. A simple observation then concludes that this fixed-point induces a self-loop on one of the vertices of $H$.

The proof presented in the subsequent sections is a formalisation of this argument. In the formal argument, we use a certain homomorphism complex instead of the box complex. This homomorphism complex makes it easier for us to deal with some technical intricacies of the proof, and it is homotopy equivalent to the box complex, and hence the core of the argument is the same.

\subsubsection*{Other proofs of the Hell--Nešetřil dichotomy}

An alternative proof of Hell--Nešetřil theorem was also provided by Bulatov \cite{Bulatov05}, and further simplifications were latter provided by Siggers \cite{Siggers10}.
The main motivation of Bulatov was to show that the $H$-colouring dichotomy follows the line of the algebraic dichotomy conjecture, i.e., that a non-bipartite graph $H$ with a Taylor polymorphism has to have a self-loop.
The difference from our proof is that this statement is proved using only combinatorics and algebra, and is more involved than ours.
Siggers's proof is obtained from the original proof of Hell and Nešetřil by providing new proofs to the few non-algebraic steps, and hence achieving the same algebraic dichotomy as Bulatov's.

Another way to prove the theorem is using cyclic polymorphisms: Barto and Kozik \cite{BartoK12} proved that Taylor polymorphism implies \emph{cyclic polymorphisms} from which it is relatively easy to show that the theorem follows. The difficulties of this proof are hidden in the proof of existence of the cyclic polymorphisms \cite[Theorem 4.2]{BartoK12} which constitutes a significant part of their paper.

Lastly, an analytical proof of the theorem have been provided by Kun and Szegedy \cite{KunS16}. This proof uses the analytical method more common in studying approximation of CSPs. The proof relies on a result of Dinur, Friedgut, and Regev \cite{DinurFR08} about independent sets in powers of non-bipartite graphs.

All of the proofs, with exception of the original proof, rely on the algebraic hardness condition due to Bulatov, Jeavons, and Krokhin \cite{BulatovJK05} stating that $H$-colouring is NP-hard unless $H$ has a Taylor polymorphism (see also Theorem~\ref{thm:bjk05} below). The proof of this theorem has also been refined several times \cite{BartoOP17, BartoBKO21}, and is generally considered to be well-understood at least within the algebraic theory of CSPs.

\section{Preliminaries}

We combine three well-known theories: the algebraic approach to the constraint satisfaction problem \cite{BartoKW17}, topological combinatorics \cite{Matousek03}, and algebraic topology \cite{Hatcher02}. The three cited sources give an accessible and detailed introduction to these topics, and we recommend to keep these sources at hand while reading this paper.
In this section, we outline some of the basic definitions, and known facts that will be useful in the proof. We include sketches of some proofs for reference.

We write $[n]$ for the set $\{1, \dots, n\}$, and $1_A$ for the identity function on a set $A$. We write nameless functions as $x\mapsto t(x)$ where $t(x)$ is an expression using $x$, and $\nothing \mapsto c$ denotes the constant function. All graphs and posets in this paper are finite.
The function $p_i \colon A^n \to A$ defined by $p_i(x_1, \dots, x_n) = x_i$ where $i \in [n]$ is called a \emph{projection} onto the $i$-th coordinate.

We use the terms endomorphism, automorphism, and isomorphism in the usual meaning, e.g., an isomorphism is a bijective homomorphism whose inverse is a homomorphism, etc.
If $\group G$ is an abelian group, the symbol $\End(\group G)$ denotes the endomorphism ring of $\group G$, i.e., the set of endomorphisms of $\group G$ with pointwise addition and composition as multiplication.

\subsection{Graphs and the \texorpdfstring{$H$}{H}-colouring problem}

We treat graphs as relational structures with one binary symmetric relation, i.e., a \emph{graph} is a pair $G = (V(G), E(G))$ where $V(G)$ is a set and $E(G) \subseteq V(G)^2$ such that if $(u,v) \in E(G)$ then $(v, u) \in E(G)$. Elements of $V(G)$ are called \emph{vertices} of $G$, and elements of $E(G)$ are called \emph{edges}.
A \emph{graph homomorphism} from a graph $G$ to a graph $H$ is a mapping $f\colon V(G) \to V(H)$ which preserves edges, i.e., it satisfies that $(f(u), f(v)) \in E(H)$ for all $(u, v) \in E(G)$.

We may now formally define the $H$-colouring problem which is a special case of a \emph{constraint satisfaction problem (CSP)}. For the sake of brevity, we won't define the CSP, and refer the reader to \cite{BartoKW17}.

\begin{definition} [$H$-colouring]
  Fix a graph $H$. The \emph{$H$-colouring problem} is the decision problem whose input is a finite graph $G$, and the goal is to decide whether $G$ maps homomorphically to $H$, or not.
\end{definition}

Two graphs $H$ and $H'$ are called \emph{homomorphically equivalent} if there are homomorphisms $f\colon H \to H'$ and $g\colon H' \to H$. A graph is a \emph{core} if it is not homomorphically equivalent to any of its proper subgraphs, i.e., each of its endomorphisms is an automorphism.
It may be observed that if $H$ and $H'$ are homomorphically equivalent then $H$-colouring and $H'$-colouring are identical problems.
For each finite graph $H$ there is a unique (up to isomorphism) core $H'$ which is homomorphically equivalent to $H$. We can therefore always assume that $H$ is a core.

Finally, a \emph{polymorphism of a graph} $H$ of arity $n$ is a mapping $f\colon H^n \to H$ that preserves edges in the sense that if $(u_i, v_i) \in E(H)$ for all $i \in [n]$, then
\[
  (f(u_1, \dots, u_n), f(v_1, \dots, v_n)) \in E(H).
\]
The core of the algebraic approach is that polymorphisms determine the complexity of $H$-colouring up to log-space reductions; if $H$ has enough polymorphisms, the $H$-colouring problem gets easier.

\subsection{Posets, simplicial complexes, and their geometric realization}

We refer to Matoušek \cite[Section 1.7]{Matousek03} for a detailed treatment of the relation between posets, simplicial complexes, and topological spaces used in this paper.

A \emph{partially ordered set (poset)} is a set $P$ together with a partial order on $P$ which we will denote by $\leq$. A partial order is a transitive, reflexive, and antisymmetric binary relation.

An element $p$ of a poset $P$ is \emph{irreducible} if it has either a unique upper cover, or a unique lower cover. A poset is \emph{ramified} if it does not contain any irreducible elements.
We say that a poset $P$ \emph{dismantles} to its subposet $Q$ if $Q$ is obtained from $P$ by iteratively removing an irreducible element. Every finite poset dismantles to a ramified subposet. As usual, we write $x < y$ if $x\le y$ and $x \neq y$.
A \emph{chain} in $P$ is a linearly-ordered subset of $P$, i.e., a set $\{x_1,\dots, x_k\}$ such that $x_1 \leq \dots \leq x_k$.
If $P$ and $Q$ are posets, we will use the notation $P^Q$ to denote the poset of all monotone maps $f\colon Q \to P$ ordered by pointwise comparison, i.e., $f \leq g$ if $f(x) \leq g(x)$ for all $x\in Q$.
Note that monotone maps $R \times Q \to P$ are in bijection with monotone maps $R \to P^Q$. There is a connection to ramified posets as shown in the following lemma. We include the proof for completeness.

\begin{lemma}[Larose and Zádori {\cite[Lemma~2.2]{LaroseZ97}}, based on Stong \cite{Strong66}]
    \label{lem:aloneComponent}
    A finite poset $P$ is ramified if and only if each connected component of $P^P$ that contains an automorphism of $P$ contains no other elements.
\end{lemma}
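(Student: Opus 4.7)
The plan is to reduce both directions to a statement about directly comparable maps. The connected components of $P^P$ are the classes of the transitive, symmetric closure of pointwise comparability, so if I can show that any $f$ pointwise comparable to an automorphism $\phi$ equals $\phi$, then an induction along a zigzag of comparable maps gives the full forward direction. Composing with $\phi^{-1}$ (which is itself monotone and preserves all comparabilities) further reduces to the case $\phi = 1_P$: assuming $P$ is ramified and $f \leq 1_P$ (the case $f \geq 1_P$ is dual), the goal is to conclude $f = 1_P$.

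For this I argue the contrapositive. Suppose $f \leq 1_P$ with $f \neq 1_P$, and pick $x \in P$ minimal for the property $f(x) < x$; let $y = f(x)$. For any $z < x$, minimality of $x$ gives $f(z) = z$, and monotonicity of $f$ gives $z = f(z) \leq f(x) = y$. Thus $y$ is the greatest element of $\{z \in P : z < x\}$, which forces $y$ itself to be a lower cover of $x$ and moreover the unique one: any other lower cover $z'$ satisfies $z' \leq y < x$ with nothing strictly between $z'$ and $x$, so $z' = y$. Hence $x$ is a down-beat irreducible element, contradicting $P$ being ramified.

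For the converse, assume $P$ is not ramified and fix an irreducible element $x$, say with unique lower cover $y$ (the up-beat case is dual). Define $f\colon P \to P$ by $f(x) = y$ and $f(z) = z$ otherwise. The same fact about unique lower covers shows that every $z < x$ already satisfies $z \leq y$, which makes monotonicity of $f$ a routine case check; plainly $f \leq 1_P$ and $f \neq 1_P$, so the component of $1_P$ in $P^P$ strictly contains $1_P$, contradicting the hypothesis applied to the automorphism $1_P$. The main technical hurdle is the minimality argument in the forward direction, where one must upgrade the strict inequality $f(x) < x$ first to the statement that $f(x)$ is a \emph{cover} of $x$ from below, and then to uniqueness; this ultimately rests on the elementary fact that in a finite poset, $x$ admits a unique lower cover if and only if $\{z : z < x\}$ has a greatest element, and the minimality of $x$ is tailored precisely to produce such a greatest element.
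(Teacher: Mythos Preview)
Your proof is correct and follows essentially the same approach as the paper's: both argue by choosing an extremal element $x$ where $f$ differs from the identity and showing that $f(x)$ must then be the unique cover of $x$, contradicting ramifiedness; for the converse both construct the same map sending an irreducible $x$ to its unique cover. The only differences are cosmetic: you work with $f \leq 1_P$ and a minimal $x$ where the paper takes $f > 1_P$ and a maximal $x$, and you spell out the zigzag induction and the reduction to $\phi = 1_P$ via composition with $\phi^{-1}$, which the paper dismisses as ``analogous.''
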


\begin{proof}
  Assume that $P$ is ramified. Let $f \in P^P$ be such that $f > \id_P$.
  Let $x$ be a maximal element satisfying $f(x) > x$. Since $f(x)$ is not an upper cover of $x$, there exists $y > x$ such that $y \not\ge f(x)$. This is in contradiction with $y = f(y)\ge f(x)$. Symmetrically, there is no $f < \id_P$, and hence the connected component of $\id_P$ contains no other elements. The general case of an arbitrary automorphism in place of $\id_P$ is analogous.

  Conversely, if $P$ is not ramified, there is $x$ with a cover $y$. Then the map which maps $x$ to $y$ and does not move any other element in the same component as the identity.
\end{proof}

A subposet $R$ of a poset $P$ is called a \emph{retract} if there is a monotone map $r\colon P \to R$ whose restriction to $R$ is identity. Equivalently, a retract $R$ is the image of $P$ under an endomorphism $r$ that satisfies $r^2 = r$.

A~\emph{(finite) simplicial complex} $\complex K$ is a finite downward-closed set of finite sets. The sets in $\complex K$ are called \emph{faces}, and the set $V(\complex K) = \union_{F\in \complex K} F$ is the set of \emph{vertices} of $\complex K$.
The \emph{order complex} of a poset $P$ is the simplicial complex whose vertices are the elements of $P$ and whose faces are all chains in $P$.
With every simplicial complex $\complex K$, one can associate a topological space $\lvert \complex K \rvert$, called the \emph{geometric realization} of $\complex K$, as follows:
Identify the vertex set of $\complex K$ with a set of points in general position in a sufficiently high-dimensional Euclidean space (here, general position means that the points in $F\cup G$ are affinely independent for all $F, G\in \complex K$). Then, in particular, the convex hull $\conv(F)$ is a geometric simplex for every $F\in \complex K$, and the geometric realization can be defined as the union $\lvert \complex K \rvert = \bigcup_{F\in \complex K} \conv(F)$ of these geometric simplices (see, e.g., \cite[Lemma~1.6.2]{Matousek03}).
In what follows, we blur the distinction between a simplicial complex and its geometric realization.

A \emph{geometric realization of a poset} $P$, denoted by $\lvert P \rvert$, is the geometric realization of its order complex. Intuitively, it is constructed by starting with points of $P$ with discrete topology, then connecting any two points $p < q$ by an arc, filling any three arcs connecting $p, q, r$ with $p < q < r$ with a triangle, etc. We will treat $P$ as a subset of $\lvert P \rvert$.
In this paper, we pay little attention to the intermediate simplicial complex, although we use terminology of \emph{faces} and \emph{vertices} coming from there, e.g., the vertices of $\geom P$ are the elements of $P$, and the faces of $\geom P$ are chains in $P$.

Every monotone function $f\colon P \to Q$ between two posets is a simplicial map between the two order complexes, and every such map consequently induces a continuous function $\geom f \colon \geom P \to \geom Q$ between the geometric realizations.
This map is defined as the linear extension of $f$ viewing the elements of $P$ and $Q$ as subsets of the corresponding geometric realizations.

\subsubsection{Homotopy and a fixed-point theorem}

We will define homotopy groups of posets through their geometric realization. This differs from how Larose and Zádori \cite{LaroseZ05} define homotopy groups of posets, which is through defining a topology directly on the poset itself. The resulting groups are nevertheless isomorphic \cite[Theorems 1\&2]{McCord66} (see also \cite[Theorem~2.3]{LaroseZ05}).

Our proof needs basic understanding of notions such as a homotopy, homotopy groups of a space, and homotopy equivalence. We recall some of these notions here, and refer to any textbook on algebraic topology, e.g., \cite{Hatcher02}, for an in depth exposition.

Informally, two continuous functions $f, g \colon X \to Y$ are \emph{homotopic}, if they can be continuously transformed to each other. This is formally expressed by a continuous map $H \colon X \times [0, 1] \to Y$ such that $H(x, 0) = f(x)$ and $H(x, 1) = g(x)$ for all $x\in X$. If $f$ and $g$ are homotopic, we will write $f \sim g$.
Two spaces $X$ and $Y$ are called \emph{homotopy equivalent} if there are continuous maps $f\colon X \to Y$ and $g\colon Y \to X$ such that $fg \sim 1_Y$ and $gf \sim 1_X$. A space $X$ is \emph{contractible} if it is homotopy equivalent to the singleton space.

The following lemma provides some intuition about homotopy of monotonous maps.

\begin{lemma} \label{lem:leq-homotopy}
  If $f, g\colon P \to Q$ are monotone and $f \leq g$, then $\geom f$ and $\geom g$ are homotopic.
\end{lemma}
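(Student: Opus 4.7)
The plan is to realise the homotopy by a single auxiliary poset. Let $C = \{0, 1\}$ be the two-element chain with $0 < 1$, so that $\geom C$ is canonically identified with $[0, 1]$. Define
\[
  H \colon P \times C \to Q, \qquad H(p, 0) = f(p), \quad H(p, 1) = g(p),
\]
where $P \times C$ carries the product order. First I would check that $H$ is monotone: for $(p, i) \leq (p', i')$ one has $p \leq p'$ in $P$ and $i \leq i'$ in $C$; monotonicity of $f$ and $g$ together with the pointwise inequality $f \leq g$ covers the four resulting cases, giving $H(p, i) \leq H(p', i')$ in $Q$.

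Next I would invoke the standard fact that the geometric realisation of a product of two finite posets agrees with the product of their geometric realisations. Concretely, there is a natural homeomorphism $\phi \colon \geom P \times [0,1] \to \geom{P \times C}$: on each simplex $\conv\{p_1, \dots, p_k\}$ of $\geom P$ (arising from a chain $p_1 \leq \dots \leq p_k$ in $P$), the prism $\conv\{p_1, \dots, p_k\} \times [0,1]$ is triangulated by the simplices coming from the ``staircase'' chains
\[
  (p_1, 0) \leq \dots \leq (p_i, 0) \leq (p_i, 1) \leq \dots \leq (p_k, 1) \qquad (i = 1, \dots, k)
\]
of $P \times C$, and these triangulations are compatible across simplices of $\geom P$.

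Finally, the composition $\geom H \circ \phi \colon \geom P \times [0,1] \to \geom Q$ is continuous, and from the definition of $H$ and $\phi$ we get $\geom H \circ \phi(\nothing, 0) = \geom f$ and $\geom H \circ \phi(\nothing, 1) = \geom g$, which is exactly a homotopy from $\geom f$ to $\geom g$.

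The only non-routine ingredient is the homeomorphism $\phi$, i.e., the fact that the order complex of $P \times C$ canonically triangulates the prism $\geom P \times [0,1]$. This is a well-known combinatorial construction (the standard triangulation of a prism by staircase simplices), so I would either cite it from \cite{Matousek03} or write out a short self-contained verification; everything else is immediate from monotonicity of $f$, $g$, and $H$.
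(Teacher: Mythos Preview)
Your proof is correct and follows essentially the same approach as the paper: both define the monotone map $H\colon P\times\{0,1\}\to Q$ by $H(p,0)=f(p)$, $H(p,1)=g(p)$, then use the homeomorphism $\geom{P\times\{0,1\}}\cong\geom P\times[0,1]$ so that $\geom H$ yields the desired homotopy. Your version is simply more explicit about the monotonicity check and the staircase triangulation underlying that homeomorphism, which the paper leaves as an observation.
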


\begin{proof}
  Consider the poset $P \times \{0, 1\}$ with the component-wise order. Then the map $H \colon P\times \{0, 1\}$ defined by $H(p, 0) = f(p)$ and $H(p, 1) = g(p)$ is monotone.
  Observe that $\geom {P \times \{0, 1\}}$ is homeomorphic to $\geom P \times [0, 1]$, hence $\geom H$ can be viewed as a map $H' \colon \geom P \times [0, 1] \to \geom Q$. This maps is the required homotopy from $\geom f$ to $\geom g$.
\end{proof}

In general, it may be observed that homotopy of monotone maps is a symmetric, transitive closure of the relation in the above lemma, i.e., $\geom f \sim \geom g$ if and only if there is a sequence of monotone maps $f_0, \dots, f_n$ such that $f = f_0 \leq f_1 \geq f_2 \leq \dots f_n = g$. In other words, $\geom f$ and $\geom g$ are homotopic if and only if $f$ and $g$ are in the same connected component of $P^P$.

Similarly, dismantlability gives homotopy equivalence, although it should be noted that the converse is not true; e.g., there are non-trivial ramified posets whose geometric realization is contractible.

\begin{lemma} \label{lem:dismantlability}
  If $P$ dismantles to $Q$ then $\geom P$ and $\geom Q$ are homotopy equivalent.
\end{lemma}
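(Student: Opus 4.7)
The plan is to reduce to the single-step case and use Lemma~\ref{lem:leq-homotopy}. Since dismantling is defined by iteratively removing irreducible elements, it suffices by induction on the number of steps to show that if $p \in P$ is irreducible then $\geom P$ is homotopy equivalent to $\geom{P \setminus \{p\}}$.

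So suppose $p$ is irreducible, and treat the case that $p$ has a unique upper cover $q$ (the case of a unique lower cover is dual). The plan is to exhibit $P \setminus \{p\}$ as a retract of $P$ in a way that the retraction composed with the inclusion lies above the identity in the pointwise order. Concretely, define $r\colon P \to P \setminus \{p\}$ by $r(p) = q$ and $r(x) = x$ otherwise, and let $\iota\colon P \setminus \{p\} \to P$ be the inclusion. Then $r\iota = 1_{P \setminus \{p\}}$ on the nose, and $\iota r \geq 1_P$ pointwise. Applying $\geom{\cdot}$, the first identity gives $\geom r \circ \geom \iota = 1_{\geom{P \setminus \{p\}}}$, and Lemma~\ref{lem:leq-homotopy} applied to $\iota r \geq 1_P$ gives $\geom \iota \circ \geom r \sim 1_{\geom P}$; this is exactly a homotopy equivalence.

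The step that needs actual verification is that $r$ is monotone, and this is where the hypothesis that $q$ is the \emph{unique} upper cover of $p$ is crucial. Concretely, we need to check that $x > p$ implies $x \geq q$. By finiteness of $P$, any such $x$ lies above some upper cover of $p$, and by uniqueness that cover is $q$. Combined with the obvious cases ($x \leq p$ gives $r(x) = x \leq p < q = r(p)$, and comparisons not involving $p$ are trivial), this yields monotonicity of $r$; the inequality $\iota r \geq 1_P$ is then immediate from $r(p) = q > p$.

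Iterating this argument along the dismantling sequence $P = P_0, P_1, \dots, P_k = Q$, and composing the resulting homotopy equivalences, gives $\geom P \sim \geom Q$. The main obstacle is really only the bookkeeping for the unique-cover hypothesis; everything else is formal manipulation of the relation between pointwise order and homotopy provided by Lemma~\ref{lem:leq-homotopy}.
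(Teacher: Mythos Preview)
Your proof is correct and follows essentially the same approach as the paper's: reduce to the removal of a single irreducible element, define the retraction sending $p$ to its unique cover, and invoke Lemma~\ref{lem:leq-homotopy} to conclude homotopy equivalence. You are in fact more careful than the paper, which asserts monotonicity of the retraction without justification and writes $if \leq 1_P$ where your $\iota r \geq 1_P$ is the correct direction.
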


\begin{proof}
  It is enough to prove the statement in case $Q$ is obtained from $P$ by removing an irreducible element $\ell \in P$. Let us further assume that $\ell$ has a unique upper cover $u$; the other case is symmetric.
  Define a map $f\colon P \to Q$ by $f(\ell) = u$ and $f(q) = q$ for all $q\in Q$. The map $f$ is clearly monotone. Let $i\colon Q \to P$ be the inclusion.
  We claim that $\geom f$ and $\geom i$ witness the required homotopy equivalence.
  Indeed, we get that $fi = 1_Q$, and that $if \leq 1_P$. The claim then follows from the fact that $\geom {if} = \geom i\circ \geom f$ and Lemma~\ref{lem:leq-homotopy}.
\end{proof}

Homotopy groups of a topological space $X$ pointed in $x_0 \in X$ are denoted by $\pi_n(X, x_0)$; we refer to \cite[Section 4.1]{Hatcher02} for precise definition. The elements of the group $\pi_n(X, x_0)$ are homotopy classes of (pointed) continuous maps $S^n \to X$, where $S^n$ denotes the $n$-dimensional sphere.
If the space $X$ is path-connected then $\pi_n(X, x_0)$ does not depend on the choice of $x_0$ \cite[p.~342]{Hatcher02}, and we will write $\pi_n(X)$ in that case. Also recall that $\pi_n(X, x_0)$ is abelian for all $n \geq 2$ \cite[p.~340]{Hatcher02}.
Every continuous map $f\colon X\to Y$ induces a group homomorphism $f_* \colon \pi_n(X, x_0) \to \pi_n(Y, f(x_0))$, and moreover, if $f \sim g$, then $f_* = g_*$. If $X$ and $Y$ are connected and homotopy equivalent, then $\pi_n(X)$ is isomorphic to $\pi_n(Y)$ for all $n$.
We write simply $f_*$ for $\geom f_*$ if $f\colon P \to Q$ is a monotone map between posets $P$ and $Q$. Note that $f_* g_* = (fg)_*$.

We rely on a fixed-point theorem that is a generalisation of Brouwer's fixed-point theorem and is well-known in algebraic topology. It is the following corollary of the Lefschetz fixed-point theorem \cite{Lefschetz37}.

\begin{theorem} [a corollary of the Lefschetz fixed-point theorem]
  \label{thm:fixed-point}
  If $X$ is a contractible finite simplicial complex, then every continuous function $f\colon X \to X$ has a fixed point.
\end{theorem}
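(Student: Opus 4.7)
The plan is to deduce this as an immediate corollary of the Lefschetz fixed-point theorem itself. Recall that the Lefschetz theorem asserts that if $X$ is a finite simplicial complex and $f\colon X \to X$ is continuous, then $f$ has a fixed point whenever the Lefschetz number
\[
  \Lambda(f) = \sum_{n \ge 0} (-1)^n \mathrm{tr}\bigl(f_*\colon H_n(X;\mathbb{Q}) \to H_n(X;\mathbb{Q})\bigr)
\]
is nonzero, where $H_n(-;\mathbb{Q})$ denotes singular homology with rational coefficients. This is the substantive input and may be quoted from a standard reference, e.g., Hatcher's textbook.

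First I would compute the rational homology of $X$. Since $X$ is contractible it is homotopy equivalent to a point, and homology is a homotopy invariant, so $H_0(X;\mathbb{Q}) \cong \mathbb{Q}$ and $H_n(X;\mathbb{Q}) = 0$ for every $n \ge 1$. In particular $X$ is path-connected, so $f_*$ acts on $H_0(X;\mathbb{Q})$ as the identity of a one-dimensional $\mathbb{Q}$-vector space. Plugging this into the formula above, only the $n=0$ term contributes, and hence $\Lambda(f) = \mathrm{tr}(\mathrm{id}_{\mathbb{Q}}) = 1 \neq 0$.

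By the Lefschetz theorem $f$ then has a fixed point, which is what we wanted. The only real obstacle is the Lefschetz theorem itself, which is a nontrivial result of algebraic topology; granted that black box, the corollary is a two-line computation.
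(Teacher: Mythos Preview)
Your proof is correct and follows essentially the same approach as the paper: compute that contractibility forces the homology to be that of a point, observe that $f_*$ is the identity on $H_0$, conclude that the Lefschetz number equals $1$, and invoke the Lefschetz fixed-point theorem (the paper cites Hatcher, Theorem~2C.3). The only cosmetic difference is that the paper phrases the computation over $\mathbb{Z}$ rather than $\mathbb{Q}$.
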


\begin{proof}
  Since $X$ is contractible, we have that the $n$-th homology group $H_n(X)$ is trivial for all $n\neq 0$, and $H_0(X) = \mathbb Z$. Moreover, $f_*\colon H_0(X)\to H_0(X)$ is the identity map. This concludes that the Lefschetz number of $f$ is $\tau(f) = 1$, hence \cite[Theorem~2C.3]{Hatcher02} applies.
\end{proof}

\subsubsection{Multihomomorphisms and homomorphism complexes}

The core of the proof is a translation of the problem from graphs to posets, and from poset to the homotopy of simplicial complexes (where we consider two functions equal if they are homotopic).
The first transition is in terms of multihomomorphisms that are assigned to a pair of graphs.

\begin{definition} [the poset of multihomomorphisms]
  Let $G, H$ be graphs. A \emph{multihomomorphism} from $G$ to $H$ is a function $f\colon V(G) \to 2^{V(H)} \setminus \{\emptyset\}$ such that, for all edges $(u, v) \in E(G)$, we have that
  \[
    f(u) \times f(v) \subseteq E(H).
  \]
  We denote the set of all such multihomomorphisms by $\mhom(G, H)$.

  There is a natural order on multihomomorphisms by component-wise comparison, i.e., $f \leq g$ if $f(u) \subseteq g(u)$ for all $u\in V(G)$.
\end{definition}

Every homomorphism can be treated as a multihomomorphism that maps each vertex to a singleton set. The multihomomorphisms can be composed in a similar fashion as homomorphisms, i.e., if $f\in \mhom(A, B)$ and $g\in \mhom(B, C)$, then $(g\circ f)(a) = \bigcup_{b\in f(a)} g(b)$ is a multihomomorphism from $A$ to $C$.
Moreover, this composition is monotone, i.e., if $f \leq f'$ and $g \leq g'$ then $g\circ f \leq g' \circ f'$.

\begin{definition}
  Let $G, H$ be two graphs. The \emph{homomorphism complex} from $G$ to $H$ is the order complex of the multihomomorphism poset $\mhom(G, H)$. We will denote this complex by $\Hom(G, H)$.
  We treat $\Hom(G, H)$ as a topological space.
\end{definition}

Our argument uses the homomorphism complex $\Hom(K_2, H)$ where $K_2$ denotes the 2-clique (i.e., an unoriented edge).
As is well-known, this complex is homotopy equivalent the box complex of $H$ \cite[p.~137]{Matousek03}.
It can be also described as follows: its vertices are (oriented) complete bipartite subgraphs of $H$ and its faces are chains thereof with respect to inclusion.
We denote the vertices of $K_2$ by $0$ and $1$ and identify a homomorphism $m \in  \hom(K_2, H)\subseteq \mhom(K_2,H)$ with the edge $(m(0),m(1))$ of $H$.
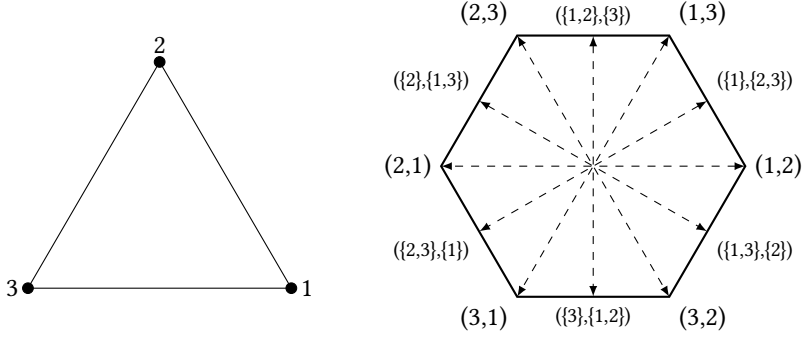
\begin{figure}
    \[
    \begin{tikzpicture}[baseline={([yshift=-.5ex]current bounding box.center)}]
        \draw (-30:2) node[anchor=west]{1}
        -- (90:2)    node[anchor=south]{2}
        -- (210:2)    node[anchor=east]{3}
        --  cycle;
        \filldraw (-30:2) circle [radius=2pt];
        \filldraw  (90:2) circle [radius=2pt];
        \filldraw (210:2) circle [radius=2pt];
    \end{tikzpicture}
    \qquad
    \begin{tikzpicture}[auto, swap, baseline={([yshift=-.5ex]current bounding box.center)}]
        \draw [thick] (0:2) node[anchor=west]{(1,2)}
        -- node (a) {\footnotesize (\{1\},\{2,3\})} (60:2)  node[anchor=south west]{(1,3)}
        -- node {\footnotesize (\{1,2\},\{3\})} (120:2) node[anchor=south east]{(2,3)}
        -- node {\footnotesize (\{2\},\{1,3\})} (180:2) node[anchor=east]{(2,1)}
        -- node (d) {\footnotesize (\{2,3\},\{1\})} (240:2) node[anchor=north east]{(3,1)}
        -- node {\footnotesize (\{3\},\{1,2\})} (300:2) node[anchor=north west]{(3,2)}
        -- node {\footnotesize (\{1,3\},\{2\})} cycle;
        \draw[dashed,latex-latex]  (0:2)--(180:2);
        \draw[dashed,latex-latex] (60:2)--(240:2);
        \draw[dashed,latex-latex](120:2)--(300:2);
        \draw[dashed,latex-latex] (30:1.73) to (210:1.73);
        \draw[dashed,latex-latex] (90:1.73) to (270:1.73);
        \draw[dashed,latex-latex] (150:1.73) to (330:1.73);
    \end{tikzpicture}
    \]
    \caption{The graph $K_3$ and the topological space $\Hom(K_2,K_3)$ which is homeomorphic to a circle. The multihomomorphisms are labelled and the action of $\auto$ on them is denoted by dashed lines.}
    \label{fig:HomSpace}
\end{figure}

This homomorphism complex has a natural action of $\mathbb Z_2$ induced by the non-trivial automorphism of $K_2$, $\nu\colon K_2 \to K_2$ which switches the two vertices. The mapping $\auto \colon \mhom(K_2, H) \to \mhom(K_2, H)$ defined as $\auto(m) = m \circ \nu$ is then a monotone bijection, and hence it induces a homeomorphism $\geom \auto \colon \Hom(K_2, H) \to \Hom(K_2, H)$.
On the simplicial level, this map flips the orientation of the complete bipartite subgraphs of $H$. We call the action \emph{flip}. See Figure~\ref{fig:HomSpace} for an example.
It is well-known that this action is (fixed-point) free as long as $H$ does not have a self-loop.
\begin{lemma} \label{lem:loop}
  Let $H$ be a graph. The flip $\georeal{\auto}$ on $\Hom(K_2, H)$ has a fixed point if and only if $H$ has a self-loop.
\end{lemma}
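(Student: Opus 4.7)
The plan is to handle the two implications separately. For the ``if'' direction, given a self-loop $(v,v)\in E(H)$, I would exhibit the constant multihomomorphism $m\in \mhom(K_2,H)$ with $m(0) = m(1) = \{v\}$. It is well-defined because $\{v\}\times\{v\} \subseteq E(H)$, and it satisfies $\auto(m) = m \circ \nu = m$ since $m$ does not depend on its argument. As a $0$-dimensional face of the order complex, $m$ is then a fixed point of $\georeal{\auto}$.

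The ``only if'' direction is more subtle. The key observation is that $\georeal{\auto}$ is a \emph{simplicial} automorphism of the order complex of $\mhom(K_2,H)$, induced by the monotone bijection $\auto$, so it respects the decomposition of $\Hom(K_2,H)$ into open simplices. Given a fixed point $x$, I would locate the unique open simplex containing it, which corresponds to some chain $m_1 < \dots < m_k$ in $\mhom(K_2,H)$. Because $\auto$ is a monotone bijection on a finite poset it is an order-automorphism, and it sends this chain to $\auto(m_1) < \dots < \auto(m_k)$. The equation $\georeal{\auto}(x) = x$, together with the fact that distinct open simplices are disjoint, then forces the open simplex of $\auto(m_1) < \dots < \auto(m_k)$ to agree with the one of $m_1 < \dots < m_k$, so $\{m_1,\dots,m_k\} = \{\auto(m_1),\dots,\auto(m_k)\}$ as sets. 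Since $\auto$ preserves the strict order on this finite chain, the only compatible bijection is the identity, hence $\auto(m_i) = m_i$ for every $i$.

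It remains to convert a fixed multihomomorphism into a self-loop. Writing $A = m_1(0)$, the equation $\auto(m_1) = m_1$ forces $m_1(0) = m_1(1) = A$, and then the multihomomorphism condition on the edge $(0,1) \in E(K_2)$ yields $A\times A\subseteq E(H)$; any $v \in A$ (which is nonempty) supplies the required self-loop. The main obstacle I expect is the simplex-identification step of the second paragraph, namely passing from the topological fixed point of $\georeal{\auto}$ to a combinatorial fixed vertex of $\auto$. This rests on three ingredients: simpliciality of $\georeal{\auto}$ (which comes for free from monotonicity of $\auto$), uniqueness of the open simplex containing a given point, and the order-rigidity of finite chains under monotone bijections. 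Once these are in place both the forward construction and the final self-loop extraction are essentially formal.
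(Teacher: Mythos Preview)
Your proposal is correct and follows essentially the same approach as the paper: the ``if'' direction exhibits the constant vertex-map as a fixed point, and the ``only if'' direction passes from the topological fixed point to the unique (open) face containing it, uses that $\auto$ is a monotone bijection to conclude the chain is fixed elementwise, and reads off the self-loop from the minimal element $m_1$ via $m_1(0)=m_1(1)$. The paper phrases the face-invariance step as ``$\georeal{\auto}$ is linear on faces'' rather than via the open-simplex decomposition, but the content is identical.
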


\begin{proof}
    If $H$ has a self-loop on $v$, then $\nothing \mapsto v$ is a homomorphism from $K_2$ to $H$, and hence a vertex of $\mhom(K_2, H)$, that is fixed by $\auto$.

    For the other direction assume that $\geom \auto$ has a fixed point which is an internal point of a face $\face$.
    Since $\georeal{\auto}$ is linear on faces, we get that $\face$ is invariant under $\auto$, consequently the minimal element $m$ of $\face$ is fixed by $\auto$ since $\auto$ is bijective and monotone. It is straightforward to see that any element of $m(0) = m(1)$ has to have a self-loop in $H$.
\end{proof}

We will also need a second lemma concerning $\auto$. Note that an edge $(u, v)\in E(H)$ defines a multihomomorphism from $K_2$ to $H$ defined as $0 \mapsto \{u\}, 1\mapsto \{v\}$. Below, we treat edges as such multihomorphisms.

\begin{lemma} \label{lem:edgeFlip}
    Let $H$ be non-bipartite. Then, it contains an edge $(u,v)$ that is in the same connected component as $(v,u) = \auto((u,v))$ in the poset $\mhom(K_2,H)$.
\end{lemma}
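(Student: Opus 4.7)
The plan is to use non-bipartiteness to extract an odd cycle $v_0 \sim v_1 \sim \dots \sim v_{2k} \sim v_0$ in $H$, set $(u,v) := (v_0, v_1)$, and exhibit an explicit zigzag of comparabilities in $\mhom(K_2, H)$ connecting $(v_0, v_1)$ to $(v_1, v_0) = \auto((v_0, v_1))$. The key observation powering the construction is that a pair $(\{a, a'\}, \{b\})$ is a multihomomorphism from $K_2$ to $H$ exactly when $b$ is a common neighbour of $a$ and $a'$. Since two consecutive edges of a cycle meet at a common vertex, this yields a ``wedge'' multihomomorphism dominating the two oriented edges sharing that vertex, letting me hop from one oriented edge to the next by flipping orientation at each wedge.

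Concretely, I would extend the indexing cyclically by $v_{2k+1} := v_0$ and $v_{2k+2} := v_1$ and consider the alternating sequence of oriented edges
\[
  e_{2j} := (v_{2j}, v_{2j+1}), \qquad e_{2j+1} := (v_{2j+2}, v_{2j+1}) \qquad (j = 0, 1, \dots, k),
\]
which starts at $e_0 = (v_0, v_1)$ and ends at $e_{2k+1} = (v_1, v_0)$. For each consecutive pair I would verify a common upper bound in $\mhom(K_2, H)$: the wedge $(\{v_{2j}, v_{2j+2}\}, \{v_{2j+1}\})$ dominates $e_{2j}$ and $e_{2j+1}$, while $(\{v_{2j+2}\}, \{v_{2j+1}, v_{2j+3}\})$ dominates $e_{2j+1}$ and $e_{2j+2}$. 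Each wedge is a legitimate multihomomorphism precisely because the vertex on its singleton side is adjacent in $H$ to both vertices on the other side, an adjacency guaranteed by the cycle. Stringing these comparabilities together yields the desired zigzag.

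Oddness of the cycle enters only at the closing wedge $(\{v_{2k}, v_1\}, \{v_0\})$, which is valid because both $v_{2k}$ and $v_1$ are neighbours of $v_0$ and which crucially lands at $(v_1, v_0)$ rather than at $(v_0, v_1)$; were the cycle of even length, the same parity of orientation swaps would return us to the starting orientation and the argument would collapse. I do not anticipate any genuine obstacle beyond careful parity and index bookkeeping modulo the cycle length, together with the routine verification that each wedge is indeed a multihomomorphism.
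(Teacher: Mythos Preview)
Your proof is correct and follows essentially the same approach as the paper's: both use the observation that two oriented edges sharing a common endpoint have a common upper bound (a ``wedge'') in $\mhom(K_2,H)$, and then walk along an odd cycle alternating orientations to connect $(v_0,v_1)$ to $(v_1,v_0)$. The paper states the connecting principle abstractly (edges sharing a source or a target are connected, hence any odd-length walk suffices), while you spell out the explicit zigzag with indices; the content is the same.
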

\begin{proof}
    Two edges $(u,v_1)$ and $(u,v_2)$ which start at the same point are connected in the poset since they have a common successor $0\mapsto \{u\}, 1\mapsto \{v_1,v_2\}$. Similarly, two incoming edges to a vertex $v$ are connected. Consequently, $(u_1, v_1)$ and $(u_2, v_2)$ are connected if there is an edge $(u_2,v_1)$ or, more generally, a path of odd length from $u_2$ to $v_1$. Since $H$ is non-bipartite, there is always a cycle of odd length connecting an edge with its inverse.
\end{proof}

\subsection{Taylor operations}

Taylor operations were introduced by Taylor \cite{Taylor77} while studying homotopy groups of topological algebras.

\begin{definition} \label{def:Taylor}
  Let $n\geq 1$. An $n$-ary operation $t\colon A^n \to A$ is \emph{Taylor} if it is idempotent, i.e., it satisfies $t(x, \dots, x) = x$ for all $x\in A$, and it satisfies, for all $i \in [n]$, an identity of the form
  \[
    t(x_{\alpha_i(1)}, \dots, x_{\alpha_i(n)}) =
    t(x_{\beta_i(1)}, \dots, x_{\beta_i(n)})
  \]
  for all $x_1, x_2 \in A$, where $\alpha_i, \beta_i\colon [n] \to [2]$ are fixed functions (that do not depend on $x_1$ and $x_2$) such that $\alpha_i(i) \neq \beta_i(i)$.
\end{definition}

It is easy to see that the above identities prevent this operation from being a projection.
We use the following algebraic criterion for NP-hardness of constraint satisfaction problems; see also \cite[Corollary~42]{BartoKW17}.

\begin{theorem} [Bulatov, Jeavons, and Krokhin \cite{BulatovJK05}]
  \label{thm:bjk05}
  If $H$ is a core, and $H$ does not have a Taylor polymorphism, then $H$-colouring is NP-complete.
\end{theorem}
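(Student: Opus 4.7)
The plan is to establish NP-hardness by a log-space reduction from a known NP-complete constraint satisfaction problem, for instance 1-in-3-SAT, using the standard machinery of primitive positive (pp) interpretations. Recall that if a structure $\mathbb A$ pp-interprets a structure $\mathbb B$, then $\CSP(\mathbb B)$ log-space reduces to $\CSP(\mathbb A)$; by the Inv/Pol Galois correspondence, which structures $\mathbb A$ pp-interprets is determined entirely by the clone of polymorphisms of $\mathbb A$.

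First I would reduce to the idempotent setting. Because $H$ is a core, the standard ``core trick'' shows that for every vertex $v \in V(H)$ the singleton relation $\{v\}$ is pp-definable with parameters from $E(H)$: any endomorphism of $H$ is an automorphism, so automorphism orbits are pp-definable, and a small additional argument pins each individual vertex. Let $H^{c}$ denote the expansion of $H$ by all singleton relations; then $\CSP(H^{c})$ log-space reduces to $\CSP(H)$, and the polymorphisms of $H^{c}$ are precisely the idempotent polymorphisms of $H$, none of which is Taylor by assumption.

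Second, I would invoke Taylor's theorem in its algebraic form: a finite idempotent algebra lacks a Taylor term operation if and only if the variety it generates contains a two-element algebra all of whose term operations are projections. Combined with the standard clone-theoretic fact that pp-interpretability corresponds to HSP-closure, this yields a pp-interpretation in $H^{c}$ of the two-element relational structure on $\{0,1\}$ containing every possible relation. In particular, the 1-in-3-SAT template $(\{0,1\}; \{(1,0,0),(0,1,0),(0,0,1)\})$ is pp-interpreted in $H^{c}$, and since 1-in-3-SAT is NP-complete by Schaefer's theorem, the reduction is complete.

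The main obstacle is the third step: turning the purely syntactic absence of a Taylor identity into the structural existence of a two-element projection algebra in the variety. This is the technical heart of the theorem and is where the various modern proofs diverge. One route uses Rosenberg's classification of minimal clones restricted to the idempotent setting, together with a case analysis showing that each possible minimal non-Taylor clone leads to NP-hardness. A more recent route goes via the Barto--Kozik absorption theorem or Siggers's term: one either exhibits a Siggers operation, contradicting the Taylor-free assumption, or finds a proper absorbing subuniverse on which to recurse, eventually locating the two-element projection subalgebra. All remaining parts of the argument are essentially formal once this structural ingredient is in place.
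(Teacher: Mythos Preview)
The paper does not give a proof of this theorem; it is quoted from \cite{BulatovJK05} (with a pointer to \cite[Corollary~42]{BartoKW17}) and used as a black box in the proof of Theorem~\ref{thm:hell-nesetril}. There is therefore no argument in the paper to compare yours against.

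Your outline is the standard one and is correct in its broad strokes. One point of confusion, though: the step you single out as the ``main obstacle'' --- passing from the absence of a Taylor identity to a two-element projection algebra in the generated variety --- is precisely the statement you already invoked two paragraphs earlier as ``Taylor's theorem in its algebraic form.'' That equivalence is Taylor's own result \cite{Taylor77}, and it is normally taken as a black box in proofs of the BJK hardness criterion; it does not require Rosenberg's classification of minimal clones, and absorption theory and the Siggers term belong to the tractability side of the dichotomy rather than to hardness. Once Taylor's equivalence and the pp-interpretation/HSP correspondence are granted, the remainder is, as you say, essentially formal.
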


One of key contributions of Taylors paper is the following theorem \cite[Corollary~5.3]{Taylor77}. If $\group G$ is a group, we say that an operation $f\colon G^n \to G$ is \emph{compatible}, or a \emph{group polymorphism}, if it commutes with the group operation, i.e., it is a group homomorphism $\group G^n \to \group G$.

\begin{lemma} [Taylor {\cite[Corollary~5.3]{Taylor77}}] \label{thm:taylor}
  Every group with a compatible Taylor operation is abelian.
\end{lemma}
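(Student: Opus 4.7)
The plan is to exploit the rigid structure of group homomorphisms out of a direct product. Since $t\colon \group G^n \to \group G$ is a group homomorphism and the $n$ coordinate copies of $\group G$ inside $\group G^n$ commute, one obtains the standard decomposition
\[
  t(x_1, \dots, x_n) = \phi_1(x_1) \cdots \phi_n(x_n),
\]
where $\phi_i(x) = t(e, \dots, e, x, e, \dots, e)$ (with $x$ in slot $i$) is an endomorphism of $\group G$, and, crucially, the images $H_i = \phi_i(\group G)$ commute pairwise elementwise, i.e.\ $[H_i, H_j] = 1$ for $i \neq j$. Idempotency of $t$ reads $\phi_1(x)\cdots\phi_n(x) = x$ for every $x$, so in particular $H_1 \cdots H_n = \group G$.

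Next I would extract information from one Taylor identity at a time. For the $i$-th identity, substituting $x_1 = e$ and $x_2 = y$ annihilates every factor $\phi_j(e) = e$ and leaves
\[
  \prod_{j \in A_i} \phi_j(y) \;=\; \prod_{j \in B_i} \phi_j(y),
\]
where $A_i = \{j : \alpha_i(j) = 2\}$ and $B_i = \{j : \beta_i(j) = 2\}$. The condition $\alpha_i(i) \neq \beta_i(i)$ puts the index $i$ in exactly one of $A_i$ and $B_i$. Since the factors $\phi_j(y)$ for distinct $j$ commute, I may reorder, cancel those indexed by $A_i \cap B_i$, and solve for $\phi_i(y)$, exhibiting it as a product of $\phi_j(y)^{\pm 1}$ with $j \neq i$. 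In particular, $\phi_i(y) \in H_1 \cdots \widehat{H_i} \cdots H_n$, the subgroup generated by the remaining images.

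For the last step I would upgrade this inclusion to centrality. The subgroup $H_1 \cdots \widehat{H_i} \cdots H_n$ centralizes $H_i$ by the cross-commutation relations, so $\phi_i(y)$ centralizes $H_i$; on the other hand $\phi_i(y) \in H_i$, and $H_i$ already commutes with every $H_j$ for $j \neq i$. Therefore $\phi_i(y)$ commutes with every element of $H_1 \cup \dots \cup H_n$, hence with $\group G = H_1 \cdots H_n$, so $\phi_i(y) \in Z(\group G)$. Since this holds for all $y$, we get $H_i \leq Z(\group G)$; together with $\group G = H_1 \cdots H_n$ this forces $\group G = Z(\group G)$, i.e.\ $\group G$ is abelian.

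The step I expect to be the main obstacle is the manipulation of the Taylor identity in the second paragraph: a priori it relates two asymmetric products that mix $\phi_j(x_1)$'s and $\phi_j(x_2)$'s, and it is only the pairwise commutation of the $H_j$'s that lets the substitution collapse both sides to clean products indexed by $A_i$ and $B_i$ from which $\phi_i(y)$ can be isolated. Once that rearrangement is available, the remainder is a short chase of centralizers turning the Taylor condition into a centrality statement for each image.
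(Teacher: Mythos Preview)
Your proof is correct and follows essentially the same route as the paper's: decompose $t$ into coordinate endomorphisms $\phi_i$ with pairwise elementwise-commuting images $H_i$ (the paper's $N_i$), use one Taylor identity per index to show that each $H_i$ lies in the product of the remaining ones, and conclude that every $H_i$ is central. The only cosmetic difference is that you isolate $\phi_i(y)$ by cancelling inside the commuting product, whereas the paper reaches the equivalent conclusion $G = N_2 \cdots N_n$ via the coset observation $t(x_1,\dots,x_n) \in x_1 \hat N$.
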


\begin{proof}
  Assume that $\group G = (G, \cdot, {^{-1}}, 1)$ is a group with a compatible Taylor operation $t\colon \group G^n \to \group G$. For each $i\in [n]$ define
  \[
    N_i = \{ t(1, \dots, 1, \underset ix, 1, \dots, 1) \mid x \in G \}.
  \]
  Observe that $N_i$ is a subgroup for all $i$, and that, for all $x \in G$,
  \[
    x = t(x, \dots, x) = t(x, 1, \dots, 1) \cdot t(1, x, 1, \dots, 1) \cdots t(1, \dots, 1, x)
  \]
  from which we may derive that $G = N_1\cdot N_2 \cdots N_n$. Consequently, it is enough to prove the following claim.

  \begin{claim}
    For all $i\in [n]$, $[N_i, G] = 1$, i.e., $hg = gh$ for all $g\in G$ and $h \in N_i$.
  \end{claim}

  Without loss of generality assume $i = 1$.
  Since
  \[
    t(x_1, 1, \dots, 1)\cdot t(1, x_2, \dots, x_n) =
    t(x_1, x_2, \dots, x_n) =
    t(1, x_2, \dots, x_n)\cdot t(x_1, 1, \dots 1),
  \]
  we get $[N_1, \hat N] = 1$ where $\hat N = N_2\cdots N_n$ is the subgroup consisting of all elements of the form $t(1, x_2, \dots, x_n)$ for some $x_2, \dots, x_n \in G$.

  To conclude the claim, we will show that $G = \hat N$.
  Let $g\in G$ be an arbitrary element. Observe that
  \[
    t(x_1, \dots, x_n) = t(x_1, x_1, \dots, x_1) \cdot t(1, x_1^{-1}x_2, \dots, x_1^{-1}x_n) \in x_1\hat N
  \]
  for all $x_1, \dots, x_n \in G$. Now, substitute $g$ and $1$ for $x_1$ and $x_2$ into the first Taylor identity to get that
  \[
    t(g, x_2, \dots, x_n) = t(1, y_2, \dots, y_n)
  \]
  for some $x_2, \dots, x_n, y_2, \dots, y_n \in \{1, g\}$. As we observed above, $t(g, x_2, \dots, x_n) = gh$ for some $h\in \hat N$; while the right-hand side of the identity is an element $h' \in \hat N$. Consequently, $g = h'h^{-1}\in \hat N$.
  This concludes the proof of the claim and the lemma.
\end{proof}

Taylor used the theorem to show that for every pointed topological space $(X, x_0)$ with a continuous Taylor operation, $\pi_1(X, x_0)$ is abelian; see also \cite[Proposition 1.1]{Taylor77}. We will use a similar argument to derive that the fundamental group of a poset with a certain property is abelian.

\section{Taylor and contractibility of finite posets}

In this section, we present the technical contribution of this paper containing a mild generalisation of a theorem of Larose and Zádori \cite[Theorem~3.2]{LaroseZ05} which states that posets that allow a monotone Taylor operation have trivial homotopy groups.

Our generalisation uses the following weaker notion of a monotone Taylor operation. This weaker version is obtained by replacing Taylor identities with inequalities.

\begin{definition}
  Let $(P, \leq)$ be a poset. A \emph{sub-Taylor polymorphism} of arity $n$ of this poset is a monotone map $t\colon P^n \to P$, such that $t(x, \dots, x) \geq x$ for all $x\in P$, together with $n$ binary monotone maps $s_i\colon P^2 \to P$ that satisfy, for all $i \in [n]$,
  \begin{align*}
    t(x_{\alpha_i(1)}, \dots, x_{\alpha_i(n)}) &\geq s_i(x_1, x_2) \\
    t(x_{\beta_i(1)}, \dots, x_{\beta_i(n)}) &\geq s_i(x_1, x_2)
  \end{align*}
  for all $x_1, x_2 \in A$, where $\alpha_i, \beta_i\colon [n] \to [2]$ are fixed functions such that $\alpha_i(i) \neq \beta_i(i)$.
\end{definition}

Remark that if $t$ is a Taylor operation, then it is also a sub-Taylor operation since we may define $s_i$'s by $s_i(x_1, x_2) = t(x_{\alpha_i(1)}, \dots, x_{\alpha_i(n)})$.

We may now formulate our generalisation of \cite[Theorem~3.2]{LaroseZ05}.

\begin{theorem} \label{thm:larose-zadori}
  Any finite poset $P$ with a sub-Taylor polymorphism has trivial homotopy, i.e., it satisfies $\pi_d(\geom P, p_0) = 0$ for all $d\geq 1$ and $p_0 \in P$.
\end{theorem}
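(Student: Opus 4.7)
The plan is to adapt the proof of Theorem~3.2 of Larose and Zádori \cite{LaroseZ05}, which treats the strict Taylor case, to the sub-Taylor setting. The crucial observation enabling the generalization is that by Lemma~\ref{lem:leq-homotopy}, the sub-Taylor inequalities
\[
t(x_{\alpha_i(1)}, \dots, x_{\alpha_i(n)}) \geq s_i(x_1, x_2) \leq t(x_{\beta_i(1)}, \dots, x_{\beta_i(n)})
\]
force the geometric realizations of both sides to be homotopic to $\geom{s_i}$, and hence to each other. So after passing to $\geom P$, the sub-Taylor polymorphism yields Taylor-style identities up to homotopy---which is exactly what the Larose--Zádori argument requires.

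I first reduce to the case $P$ is ramified. If $Q$ is the ramified subposet to which $P$ dismantles, with monotone retraction $r \colon P \to Q$ and inclusion $\iota \colon Q \to P$, then $t'(y_1, \dots, y_n) := r(t(\iota(y_1), \dots, \iota(y_n)))$ together with $s_i'(y_1, y_2) := r(s_i(\iota(y_1), \iota(y_2)))$ defines a sub-Taylor polymorphism on $Q$: monotonicity is immediate, and since $r$ is monotone and fixes $Q$ pointwise, both the diagonal condition and the sub-Taylor inequalities transfer from $(t, s_i)$ to $(t', s_i')$. Lemma~\ref{lem:dismantlability} then gives $\geom P \simeq \geom Q$, so we may assume $P$ is ramified.

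Next, pick a basepoint $p_0 \in P$ in the connected component of the given one satisfying $t(p_0, \dots, p_0) = p_0$; such a point is obtained by iterating $t$ along the diagonal and using that the resulting $\geq$-chain stabilizes in the finite poset $P$. Let $G := \pi_d(\geom P, p_0)$ and $\tau := \geom t \colon \geom P^n \to \geom P$, a pointed map. Because concatenation of spheres commutes with pointwise application of $\tau$, the induced $\tau_* \colon \pi_d(\geom P^n) \cong G^n \to G$ is a group homomorphism with respect to the product group structure on $G^n$, and by the key observation above it satisfies $\tau_*(x, \dots, x) = x$ (from $t \circ \Delta \geq 1_P$) as well as $\tau_*(x_{\alpha_i(1)}, \dots, x_{\alpha_i(n)}) = \tau_*(x_{\beta_i(1)}, \dots, x_{\beta_i(n)})$. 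Hence $\tau_*$ is a compatible Taylor operation on $G$, and Lemma~\ref{thm:taylor} gives that $G$ is abelian (automatic for $d \geq 2$; substantive for $d = 1$). Setting $T_i(x) := \tau_*(0, \dots, x, \dots, 0)$ with $x$ in position $i$ then yields the decomposition $\tau_*(x_1, \dots, x_n) = \sum_i T_i(x_i)$, from which one extracts the endomorphism identities $\sum_i T_i = 1$ and $\sum_{j \in \alpha_i^{-1}(1)} T_j = \sum_{j \in \beta_i^{-1}(1)} T_j$ in $\End(G)$ for each $i \in [n]$.

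The remaining step---concluding $G = 0$ from these relations---is the main technical obstacle, and where the bulk of the adaptation from \cite{LaroseZ05} must happen. The algebraic identities alone are insufficient, since nontrivial abelian groups do admit compatible Taylor operations (Lemma~\ref{thm:taylor} yields only abelianness). The additional input must come from the finiteness of $P$: each $T_i$ is represented by a monotone self-map of $P$, the monoid of such self-maps is finite, and for ramified $P$ the connected component of $1_P$ in $P^P$ is a singleton by Lemma~\ref{lem:aloneComponent}. Combining these algebraic identities with the rigidity of monotone self-maps of a finite ramified poset---carefully tracking the sub-Taylor inequalities in place of strict Taylor equalities throughout---then yields the desired triviality, following the architecture of \cite{LaroseZ05}.
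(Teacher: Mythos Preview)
Your proposal correctly reproduces the content of Lemma~\ref{lem:taylor}: after passing to $G = \pi_d(\geom P, p_0)$, the sub-Taylor inequalities become Taylor identities on the nose (via Lemma~\ref{lem:leq-homotopy}), so $\tau_*$ is a compatible Taylor operation and $G$ is abelian. But your final paragraph is not a proof---it is an admission that the essential step remains, followed by a pointer to ``the architecture of \cite{LaroseZ05}.'' As you yourself observe, the relations $\sum_i T_i = 1$ together with the Taylor identities in $\End(G)$ cannot force $G = 0$ on their own (e.g.\ $x - y + z$ on any nontrivial abelian group), and nothing concrete in your sketch converts the finiteness of $P$ or Lemma~\ref{lem:aloneComponent} into a constraint on the $T_i$.

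The mechanism you are missing is an \emph{induction on $\lvert P\rvert$}, and it does not run on the $n$-ary $t$ directly. Inside the minimal-counterexample argument one reduces to $P$ connected, ramified, and non-contractible; ramifiedness together with Lemma~\ref{lem:aloneComponent} then forces $t$ to be genuinely idempotent, and Lemma~\ref{lem:larose-thm2} extracts from $t$ a \emph{binary} monotone idempotent non-projection $f\colon P^2\to P$. By Lemma~\ref{lem:pouzet} every one-variable slice of $f$ fails to be surjective, so some power of such a slice retracts $P$ onto a \emph{proper} subposet $R$; the inductive hypothesis gives $\pi_d(\geom R)=0$, whence $r_1 := f_*(\,\cdot\,,0)$ and $r_2 := f_*(0,\,\cdot\,)$ are nilpotent in $\End(G)$. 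Since $r_1 + r_2 = 1$, the element $r_2 = 1 - r_1$ is simultaneously invertible (geometric series) and nilpotent, forcing $\End(G)$ to be the zero ring. The reduction to arity two is not cosmetic: $\End(G)$ need not be commutative, so ``a sum of $n$ nilpotents equals $1$'' is not directly contradictory for $n>2$. None of these ingredients---the induction, the arity reduction, Lemma~\ref{lem:pouzet}, or the nilpotency-via-proper-retracts step---appears in your write-up, and without them the argument does not close.
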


We will closely follow the proof of Larose and Zádori with a few changes.
Let us first state a few lemmata. The first lemma is a consequence of Lemma~\ref{thm:taylor}, and it is proved using Taylor's ideas \cite{Taylor77} together with the fact that two monotone functions satisfying $f \leq g$ induce the same function on homotopy; see also \cite[Section 3.4.1]{KrokhinOWZ23} or \cite[Appendix~C]{FilakovskyNOTW24}.

\begin{lemma} \label{lem:taylor}
  Any poset $P$ with a sub-Taylor polymorphism has abelian homotopy, i.e., $\pi_1(\geom P, p_0)$ is abelian for all $p_0\in P$.
\end{lemma}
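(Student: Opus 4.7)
The plan is to mimic Taylor's proof of Lemma~\ref{thm:taylor} at the level of the fundamental group. Concretely, I would extract from the sub-Taylor polymorphism $t$ a compatible Taylor operation on the group $G = \pi_1(\geom P, p_0)$, and then invoke Lemma~\ref{thm:taylor} to conclude that $G$ is abelian.

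To define the induced operation $t_*\colon G^n \to G$, the plan is to apply $t$ pointwise to $n$-tuples of loops. Loops $\gamma_1, \dots, \gamma_n$ based at $p_0$ assemble into a single loop $(\gamma_1, \dots, \gamma_n)$ in $\geom{P}^n$, and using the standard homotopy equivalence $\geom{P}^n \simeq \geom{P^n}$ for finite posets followed by $\geom t$, we obtain a loop in $\geom P$ based at $t(p_0, \dots, p_0)$. The inequality $t(p_0,\dots,p_0) \geq p_0$ coming from the sub-idempotent condition provides, via Lemma~\ref{lem:leq-homotopy}, a canonical homotopy class of paths from this new basepoint back to $p_0$, which I would use to translate the loop into a class in $G$, yielding $t_*([\gamma_1],\dots,[\gamma_n])$.

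The next step is to show that $t_*$ is a group homomorphism and satisfies the Taylor identities on $G$. For the former, I would use an Eckmann--Hilton-style computation: after parameterising every loop so that concatenations join at $s=1/2$, the loop $t(\gamma_1*\delta_1, \dots, \gamma_n*\delta_n)$ coincides (after re-basing) with $t(\gamma_1,\dots,\gamma_n) * t(\delta_1,\dots,\delta_n)$, on the nose. For the Taylor identities, note that for each $i$ the binary monotone maps $(x_1,x_2)\mapsto t(x_{\alpha_i(1)},\dots,x_{\alpha_i(n)})$ and $(x_1,x_2)\mapsto t(x_{\beta_i(1)},\dots,x_{\beta_i(n)})$ both dominate $s_i$; by Lemma~\ref{lem:leq-homotopy} their geometric realisations are each homotopic to $\geom{s_i}$, and hence to each other. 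They therefore induce the same map on $\pi_1$, which is precisely the required Taylor identity for $t_*$. Idempotency $t_*(g,\dots,g) = g$ follows the same way from $t(x,\dots,x) \geq x$ and Lemma~\ref{lem:leq-homotopy}. At this point, Lemma~\ref{thm:taylor} applied to $t_*$ on $G$ closes the argument.

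The main obstacle I anticipate is careful bookkeeping around the basepoint: since $t(p_0,\dots,p_0) \neq p_0$ in general, every ``equation'' among loops above has to be read up to re-basing via the canonical path supplied by sub-idempotency, and the homotopy equivalence $\geom{P}^n \simeq \geom{P^n}$ has to be chosen compatibly with loop concatenation so that the Eckmann--Hilton step goes through cleanly. Once this setup is in place the computations are routine and closely parallel the treatments in \cite[Section 3.4.1]{KrokhinOWZ23} and \cite[Appendix~C]{FilakovskyNOTW24}, to which the paper explicitly defers.
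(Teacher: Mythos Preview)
Your proposal is correct and follows essentially the same route as the paper: push $t$ to an operation $t_*$ on $\pi_1(\geom P,p_0)$, use Lemma~\ref{lem:leq-homotopy} to convert the sub-Taylor inequalities into genuine Taylor identities for $t_*$, and then invoke Lemma~\ref{thm:taylor}. The only difference is that the paper sidesteps the basepoint bookkeeping you anticipate by first replacing $p_0$ with a maximal element of its connected component, so that $t(p_0,\dots,p_0)\geq p_0$ forces $t(p_0,\dots,p_0)=p_0$ and all the relevant maps are basepoint-preserving from the start.
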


\begin{proof}
  Assume that $t, s_1, \dots, s_n$ are sub-Taylor operations.
  We assume that $p_0$ is a maximal element and thus $t(p_0, \dots, p_0) = p_0$. This assumption is without loss of generality since $\pi_1(\geom P, p_0)$ only depends on the connected component of $p_0$.

  First, by a standard argument, it follows that a monotone operation $f\colon P^n \to P$ such that $f(p_0, \dots, p_0) = p_0$ induces a compatible operation $f_* \colon \pi_1(\geom P, p_0)^n \to \pi_1(\geom P, p_0)$ relying on the fact that $\pi_1(\geom {P^n}, (p_0, \dots, p_0))$ and $\pi_1(\geom P, p_0)^n$ are naturally isomorphic.
  Furthermore, by following the argument in \cite[Proposition 1.1]{Taylor77}, we get that if
  \[
    g(x_1, \dots, x_m) = f(x_{\mu(1)}, \dots, x_{\mu(n)})
  \]
  for some $\mu \colon [n] \to [m]$ and monotone functions $g\colon P^m \to P$ and $f\colon P^n \to P$, then also
  \[
    g_*(x_1, \dots, x_m) = f_*(x_{\mu(1)}, \dots, x_{\mu(n)}).
  \]
  Finally, observe that if two monotone functions $f, g \colon P^n \to P$ satisfy $f \leq g$, then $f_* = g_*$. This is since $\geom f \sim \geom g$ by Lemma~\ref{lem:leq-homotopy}, and hence the induced action on the fundamental group is identical.

  We claim that the function $t_* \colon \pi_1(\geom P, p_0)^n \to \pi_1(\geom P, p_0)$, induced by the sub-Taylor operation $t\colon P^n \to P$, is a compatible Taylor operation.
  It is a group homomorphism by definition. To show that it is idempotent, observe that the function $\tilde t \colon P \to P$ defined by $\tilde t(x) = t(x, \dots, x)$ satisfies $\tilde t \geq 1_P$, hence $\geom {\tilde t} \sim 1_{\geom P}$, and finally, $\tilde t_* = 1_{\pi_1(\geom P, p_0)}$. We therefore get that $t_*(x, \dots, x)  = \tilde t_*(x) = x$ as required.
  Furthermore, $t$ satisfies the $i$-th Taylor identity with the same distribution of variables as in the sub-Taylor inequalities, i.e., the same $\alpha_i$ and $\beta_i$. This is since the functions
  \begin{align*}
    \ell_i(x_1, x_2) &= t(x_{\alpha_i(1)}, \dots, x_{\alpha_i(n)}) \\
       r_i(x_1, x_2) &= t(x_{\beta_i(1)}, \dots, x_{\beta_i(n)})
  \end{align*}
  satisfy $\ell_i \geq s_i \leq r_i$, and hence $(\ell_i)_* = (r_i)_*$ by the same argument as above. Consequently,
  \[
    t_*(x_{\alpha_i(1)}, \dots, x_{\alpha_i(n)})
    = (\ell_i)_*(x_1, x_2)
    = (r_i)_*(x_1, x_2)
    = t_*(x_{\beta_i(1)}, \dots, x_{\beta_i(n)}).
  \]
  This concludes that $\pi_1(\geom P, p_0)$ has a compatible Taylor operation, and therefore is abelian by Lemma~\ref{thm:taylor}.
\end{proof}

Recall that if $d \geq 2$, then $\pi_d(X, x_0)$ is abelian for all spaces $X$ and $x_0 \in X$ \cite[p.~340]{Hatcher02}, and hence the above lemma implies that all the homotopy groups of a poset with a sub-Taylor polymorphism are abelian.

\begin{lemma} \label{lem:retract}
  If a poset $P$ has a sub-Taylor polymorphism, then so does each of its retracts $R$.
\end{lemma}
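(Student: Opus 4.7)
The plan is to define the sub-Taylor polymorphism on $R$ by composing the given one on $P$ with the retraction map. More precisely, let $r\colon P \to R$ be a monotone retraction, i.e., $r\restriction_R = 1_R$, and let $t\colon P^n \to P$, $s_1, \dots, s_n \colon P^2 \to P$ be the sub-Taylor polymorphism of $P$. Define $t'\colon R^n \to R$ and $s_i'\colon R^2 \to R$ by
\[
  t'(x_1, \dots, x_n) = r(t(x_1, \dots, x_n)),
  \qquad
  s_i'(x_1, x_2) = r(s_i(x_1, x_2)).
\]

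The verification is then a straightforward check. First, $t'$ and each $s_i'$ are monotone as compositions of monotone maps. Second, for every $x \in R$ we have $t(x, \dots, x) \geq x$ in $P$; applying $r$ (monotone) and using $r(x) = x$ yields $t'(x, \dots, x) \geq x$, so $t'$ is idempotent up to inequality in the required sense. Third, for any $x_1, x_2 \in R \subseteq P$ the sub-Taylor inequalities
\[
  t(x_{\alpha_i(1)}, \dots, x_{\alpha_i(n)}) \geq s_i(x_1, x_2),
  \qquad
  t(x_{\beta_i(1)}, \dots, x_{\beta_i(n)}) \geq s_i(x_1, x_2)
\]
hold in $P$; applying the monotone map $r$ to both sides gives exactly the corresponding inequalities for $t'$ and $s_i'$, with the same $\alpha_i, \beta_i$ (which depend only on indices, not on $P$). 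Hence $t', s_1', \dots, s_n'$ form a sub-Taylor polymorphism of $R$.

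There is no real obstacle; the only thing to observe is that monotonicity of $r$ is enough to push inequalities through, and that we do not need $r$ to preserve the value of $t$ on diagonal elements on the nose (which it generally does not), because the sub-Taylor definition only demands $t'(x, \dots, x) \geq x$ rather than equality. This is precisely what makes the sub-Taylor variant better behaved under retracts than the original Taylor notion.
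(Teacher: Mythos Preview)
Your proof is correct and follows exactly the same approach as the paper: compose $t$ and the $s_i$ with the retraction $r$ and use monotonicity of $r$ together with $r\restriction_R = 1_R$ to push the sub-Taylor inequalities (including $t(x,\dots,x)\ge x$) down to $R$. The paper's proof is the same, only more terse.
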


\begin{proof}
  Assume $r\colon P \to R$ is a retraction, and $t, s_1, \dots, s_n$ are the sub-Taylor monotone operations on $P$. It is straightforward to check that $rt, rs_1, \dots, rs_n$ are sub-Taylor operations of $R$. For example, for all $p \in R$, we have $r(p) = p$, and hence $rt(p, \dots, p) \ge r(p) = p$.
\end{proof}

For each operation $f\colon P^n\to P$, each $k \in [n]$, and each $a_1, \dots, a_{k-1}, a_{k+1}, \dots, a_n \in P$, we call the function $e \colon P \to P$ defined by
\[
    e(x) = f(a_1, \dots, a_{k-1}, x, a_{k+1}, \dots, a_n)
\]
a \emph{slice} of $f$. If $f$ is monotone then all its slices are monotone as well.
The following is proved by the same argument as \cite[Corollary~2.2]{LaroseZ05} and \cite[Lemma 2.9]{Larose06}.

\begin{lemma} \label{lem:pouzet}
  Let $P$ be a finite, connected, ramified poset. Let $f\colon P^{n+1} \to P$ be an idempotent monotone map. Fix $a_1, \dots, a_n \in P$ and consider the slice $e\colon P \to P$ defined by $e(x) = f(a_1, \dots, a_n, x)$.
  Then either 
  \begin{itemize}
      \item the slice $e$ is not onto, or
      \item the map $f$ is the projection on the last component, i.e., $f(y_1, \dots, y_n, x)=e(x)=x$ for all $y_1, \dots, y_n \in P$.
  \end{itemize} 
\end{lemma}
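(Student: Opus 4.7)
The plan is to assume the slice $e$ is onto and to deduce that $f$ must be the last projection. The key observation is that surjectivity of $e$ forces $e$ to be an automorphism of $P$, which (by ramifiedness and Lemma~\ref{lem:aloneComponent}) isolates $e$ inside its connected component of $P^P$; since the map sending tuples $(y_1,\dots,y_n)$ to the corresponding slice is a monotone map from a connected poset into $P^P$ hitting $e$, this forces every slice to equal $e$.

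First, I would show that $e$ is an automorphism. Since $P$ is finite and $e\colon P\to P$ is a surjection, it is a bijection. Consider the sequence $e, e^2, e^3, \dots$ of monotone surjections of $P$; since $P^P$ is finite, we have $e^i = e^j$ for some $i<j$, and cancelling (using injectivity) yields $e^{j-i} = 1_P$. Hence $e^{-1} = e^{j-i-1}$ is monotone, so $e$ is an order-automorphism of $P$. By Lemma~\ref{lem:aloneComponent}, the connected component of $e$ in $P^P$ is the singleton $\{e\}$.

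Next I would introduce the monotone map
\[
  g\colon P^n \to P^P, \qquad
  g(y_1, \dots, y_n)(x) = f(y_1, \dots, y_n, x),
\]
which is well-defined and monotone because $f$ is monotone and the order on $P^P$ is pointwise. The poset $P^n$ is connected: any two tuples can be joined by a zigzag of order-comparisons built coordinatewise from a zigzag in the connected poset $P$. A monotone map sends zigzag-connected elements to zigzag-connected elements, so $g(P^n)$ lies in a single connected component of $P^P$. Since $g(a_1, \dots, a_n) = e$, that component is $\{e\}$ by the previous paragraph, whence $g$ is constant: $f(y_1, \dots, y_n, x) = e(x)$ for all $y_1, \dots, y_n, x \in P$. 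Finally, the idempotency of $f$ gives $e(x) = f(x, \dots, x, x) = x$, so $e = 1_P$ and $f$ is the projection onto the last coordinate.

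I expect the only subtle step to be verifying that surjectivity of $e$ upgrades to $e$ being an order-automorphism; the rest is essentially assembling Lemma~\ref{lem:aloneComponent} with the connectedness of $P^n$ and the idempotency of $f$. The argument closely mirrors \cite[Corollary~2.2]{LaroseZ05} and \cite[Lemma~2.9]{Larose06}, with the monotone map $g$ into $P^P$ playing the central role.
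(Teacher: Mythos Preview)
Your proof is correct and follows essentially the same route as the paper: define $g\colon P^n\to P^P$, use connectedness of $P^n$ to force the image of $g$ into a single component, then invoke Lemma~\ref{lem:aloneComponent} to conclude $g$ is constant and finish with idempotency. You supply more detail than the paper on two points it leaves implicit---that a surjective monotone self-map of a finite poset is an automorphism (via the $e^{j-i}=1_P$ trick) and that $P^n$ is connected---but the architecture is identical.
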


\begin{proof}
    Note that the two cases exclude each other.
  Consider the map $g\colon P^n \to P^P$ defined by $g(y_1, \dots, y_n) = x \mapsto f(y_1, \dots, y_n, x)$. Since $P$ is connected, so is the image of $g$. If $e = g(a_1, \dots, a_n)$ is onto, then it is an automorphism of $P$. By Lemma~\ref{lem:aloneComponent}, $e$ is alone in its component of $P^P$. Consequently, $g$ is a constant map $\nothing \mapsto e$, and hence $f$ does not depend on its first $n$ coordinates. As $f$ is idempotent, $e$ is the identity and $f$ is the projection.
\end{proof}

\begin{lemma} \label{lem:projection-contractible}
  If $X$ is a topological space such that two projections $p_i, p_j\colon X^n \to X$ for some $i\neq j$ are homotopic, then $X$ is contractible.
\end{lemma}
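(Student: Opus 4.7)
The plan is to reduce contractibility of $X$ to the homotopy between a specific inclusion composed with the two projections. Without loss of generality assume $i = 1$ and $j = 2$, and pick any basepoint $x_0 \in X$ (the statement is vacuous or requires separate handling if $X = \emptyset$, but in every application to our homomorphism complexes $X$ will contain vertices). Define an inclusion
\[
  \iota \colon X \to X^n, \qquad \iota(x) = (x, x_0, x_0, \dots, x_0).
\]
Then by direct computation $p_1 \circ \iota = 1_X$ while $p_2 \circ \iota$ is the constant map $\nothing \mapsto x_0$.

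Given a homotopy $H \colon X^n \times [0, 1] \to X$ witnessing $p_1 \sim p_2$, the composition $H \circ (\iota \times 1_{[0,1]}) \colon X \times [0, 1] \to X$ is a homotopy from $1_X$ to the constant map at $x_0$. This shows $X$ is contractible.

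The argument is essentially a one-line diagram chase, and the only conceptual step is recognising that restricting to the slice $\{(x, x_0, \dots, x_0) : x \in X\}$ of $X^n$ simultaneously realises $1_X$ (as the first projection) and a constant map (as the second), so that a homotopy between the two projections pulls back to a contraction of $X$. There is no real obstacle; the proof will be a short paragraph invoking the fact that pre-composition with a continuous map preserves the homotopy relation.
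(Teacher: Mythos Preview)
Your proof is correct and essentially identical to the paper's: both fix a basepoint, substitute it into all but the first coordinate, and observe that the given homotopy restricts to a homotopy between $1_X$ and a constant map. The only difference is cosmetic—you phrase the restriction as precomposition with $\iota \times 1_{[0,1]}$, while the paper writes $H'(x,t) = H(x, y_0, \dots, y_0; t)$ directly.
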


\begin{proof}
  Without loss of generality, assume $i=1$ and $j=2$, and let $H\colon X^n \times [0, 1] \to X$ be a homotopy between the two projections, i.e., 
  \[
    H(x, y, z_3,\dots,z_{n}; 0) = x \text{ and } H(x, y,z_3,\dots,z_{n}; 1) = y
  \]
  for all $x, y, z_3,\dots,z_n \in X$.
  Pick $y_0 \in X$ arbitrarily, and define $H' \colon X \times [0, 1] \to X$ by $H'(x, t) = H(x, y_0, y_0, \dots, y_0; t)$. Then $H'$ is continuous since $H$ is, and moreover it is a homotopy between $1_X$ and the constant $y_0$ map, hence a contraction of $X$ to $y_0$.
\end{proof}

It is also possible to state this result for posets in the same language as Lemma~\ref{lem:aloneComponent}: Consider a poset $P$ and two projections $p_i, p_j\colon P^n\to P$ where $i\neq j$. If they are in the same connected component in $P^{P^n}$, then $P$ is contractible.

The final essential piece to our proof is the following lemma. Part of the proof is heavily inspired by \cite[Theorem 2]{Larose91}, although we add a few additional assumptions to simplify the proof.
We present the full argument here for completeness.

\begin{lemma} \label{lem:larose-thm2}
  Let $P$ be a non-contractible, connected poset.
  If, for some $m\geq 2$, $P$ admits a monotone idempotent operation $g\colon P^m \to P$ which is not a projection, then it also admits a binary monotone idempotent operation that is not a projection.
\end{lemma}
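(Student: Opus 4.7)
I would argue by induction on $m$. The base case $m=2$ is trivial, as $g$ itself is the desired binary operation. For $m \ge 3$, it suffices to produce a monotone idempotent operation of arity $m-1$ that is not a projection; the inductive hypothesis then delivers the binary operation.

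The natural candidates of arity $m-1$ are the coordinate-identifications of $g$: for each pair $1 \le i < j \le m$, let $g^{(i,j)}\colon P^{m-1}\to P$ denote the operation obtained from $g$ by identifying its $i$-th and $j$-th arguments. Each $g^{(i,j)}$ is monotone and idempotent, so if some $g^{(i,j)}$ fails to be a projection, the inductive step is done. We therefore assume, for contradiction, that every $g^{(i,j)}$ equals some projection $p_{k_{ij}}$, and aim to derive that $P$ must be contractible---contradicting the hypothesis.

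To exploit Lemma~\ref{lem:pouzet}, I would first dismantle $P$ to a ramified retract $P_0$; by Lemma~\ref{lem:dismantlability}, $P_0$ is non-contractible and connected, and $g_0 = r\circ g|_{P_0^m}$ (with $r\colon P\to P_0$ the retraction) is a monotone idempotent operation on $P_0$ whose collapses are again the projections $p_{k_{ij}}$, since $r$ is the identity on $P_0$. The core claim is that these data force two \emph{distinct} projections $p_a, p_b\colon P_0^m\to P_0$ to lie in the same connected component of $P_0^{P_0^m}$. Via Lemma~\ref{lem:leq-homotopy} together with the poset-analogue of Lemma~\ref{lem:projection-contractible} remarked immediately after its proof, this yields contractibility of $P_0$, and hence of $P$.

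To establish the core claim, I would apply Lemma~\ref{lem:pouzet} to slices of $g_0$ in each coordinate: for every $i$, either $g_0 = p_i$ on $P_0$, or every slice of $g_0$ varying in coordinate $i$ fails to be onto. Combining this dichotomy with the prescribed projection values $k_{ij}$, one performs a case analysis on the patterns of the $k_{ij}$'s to construct an explicit monotone zigzag $p_a = f_0 \le f_1 \ge f_2 \le \cdots \ge f_n = p_b$ of operations $P_0^m \to P_0$ linking two distinct projections. The main obstacle is this combinatorial case analysis, which adapts \cite[Theorem~2]{Larose91} to the poset setting; a subtle sub-case is when $g_0$ itself already equals some projection $p_k$ on $P_0$, and one must then use the fact that $g$ is non-projection on $P$---in particular, on tuples involving elements of $P\setminus P_0$ dismantled away---to extract the required zigzag or, alternatively, a non-projection collapse of $g$ on $P$.
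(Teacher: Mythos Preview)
Your overall inductive framework is the same as the paper's, but the inductive step diverges substantially, and the divergence introduces a real gap.

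The paper never dismantles and never invokes Lemma~\ref{lem:pouzet} in this proof. Instead, from an $(m{+}1)$-ary $g$ it forms the single binary collapse $h(x,y)=g(x,y,\dots,y)$. If $h$ is not a projection we are done. If $h(x,y)=y$, then every slice $g_a(x_1,\dots,x_m)=g(a,x_1,\dots,x_m)$ is \emph{idempotent}; the map $a\mapsto g_a$ is monotone $P\to P^{P^m}$, so if every $g_a$ were a projection, connectivity of $P$ together with Lemma~\ref{lem:projection-contractible} would force them all to be the \emph{same} projection, making $g$ itself a projection. Hence some $g_a$ is a non-projection $m$-ary idempotent, and induction applies. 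If $h(x,y)=x$, one shows via a single comparison $a<b$ that $h'(x,y)=g(y,x,y,\dots,y)$ must equal $y$, swaps the first two coordinates of $g$, and is back in the previous case. No ramification, no slices-onto argument, no pattern analysis of $k_{ij}$'s.

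Your route, by contrast, commits to dismantling to a ramified $P_0$ in order to access Lemma~\ref{lem:pouzet}, and this is where the gap sits. You correctly flag the ``subtle sub-case'' in which $g_0=r\circ g|_{P_0^m}$ is already a projection on $P_0$, but your suggested fixes do not work: an idempotent binary operation on $P_0$ does not lift to an idempotent one on $P$ (composing with the retraction $r$ gives $f(x,x)=r(x)\neq x$ for $x\notin P_0$), and ``using tuples involving elements of $P\setminus P_0$'' gives no monotone zigzag inside $P_0^{P_0^m}$, which is where your contradiction is supposed to live. Moreover, the promised ``combinatorial case analysis on the patterns of the $k_{ij}$'s'' is exactly the content of the argument and is not supplied; Świerczkowski-type reasoning on pair-identifications is delicate already at $m=3$, and you have not shown how Lemma~\ref{lem:pouzet} (which speaks about slices with \emph{fixed constants}, not identifications) feeds into it.

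In short: drop the dismantling and Lemma~\ref{lem:pouzet} entirely. The missing idea is to collapse all but one coordinate to a single variable, so that if the resulting binary map is the ``wrong'' projection you get idempotence of the one-coordinate slices for free; then connectivity of $P$ plus non-contractibility (via Lemma~\ref{lem:projection-contractible}) does all the work.
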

\begin{proof}
  We prove the statement by induction on $m \geq 2$. There is nothing to prove for $m = 2$. For the induction step, assume that $m \geq 2$ and $g\colon P^{m+1} \to P$ is a monotone idempotent operation which is not a projection.
  Consider the operation $h\colon P^2 \to P$ defined by
  \[
    h(x, y) = g(x, y, \dots, y).
  \]
  Clearly, $h$ is idempotent, monotone, and binary. If it is not a projection, we are done. Otherwise, we consider two cases:
  \begin{itemize}
    \item $h(x, y) = y$. For each $a\in P$, consider the operations
      \[
        g_a(x_1, \dots, x_m) = g(a, x_1, \dots, x_m),
      \]
      all of which are idempotent since $h(a, y) = y$. Note that $g_a$ is monotone since $a\le a$.
      We claim that there exists $a\in P$ such that $g_a$ is not a projection.
      Otherwise, each $g_a$ is a projection, and since $P$ is connected and $a \mapsto g_a$ is a monotone operation $P \to P^{P^m}$, all $g_a$'s are homotopic. Consequently, by Lemma~\ref{lem:projection-contractible}, they are projection on the same coordinate, which implies that $g$ itself is a projection.
      This concludes that $g_a$ is an idempotent monotone operation which is not a projection and is of smaller arity than $g$.
    \item $h(x, y) = x$. Consider the operation
      \[
        h'(x, y) = g(y, x, y, \dots, y).
      \]
      Again, we may assume that $h'$ is a projection. We claim that $h'(x, y) = y$. This is since there are two elements $a < b$ in $P$, and
      \[
        b = h(b, a) = g(b, a, a, \dots, a) \leq g(b, a, b, \dots, b) = h'(a, b),
      \]
      which implies that $h'(a, b) \neq a$.
      After flipping the two first coordinates of $g$, we can continue as in the first case. \qedhere
  \end{itemize}
\end{proof}

Finally, we get to the proof of the main theorem of this section.

\begin{proof}[Proof of Theorem~\ref{thm:larose-zadori}]
  Fix $d \geq 1$. We will prove the statement by induction on the size of $P$. In the usual fashion, we will formulate this induction by assuming that $P$ is the smallest counterexample and derive a contradiction.
  From the minimality of $P$ and Lemma~\ref{lem:retract}, we may immediately derive that:
  \begin{itemize}
    \item $P$ is connected, hence $\pi_d(\geom P, p_0)$ does not depend on the choice of $p_0$. Otherwise a connected component of a $p_0$ with $\pi_d(\geom P, p_0) \neq 0$ is a retract and thus a smaller poset with a sub-Taylor polymorphism.
    \item $P$ is ramified. If $P$ is not ramified, it may be dismantled to a proper retract $R$ which is homotopy equivalent. The poset $R$ would again yield a smaller counterexample.
    \item $\geom P$ is not contractible. Otherwise $\pi_d(\geom P) = 0$ for all $d$.
  \end{itemize}

  We claim that $P$ admits a binary monotone idempotent operation $f\colon P^2 \to P$ which is not a projection. This operation can be constructed from the sub-Taylor operations $t, s_i$.
  First, we show that $t$ is idempotent.
  We have
  \[
    t(x, \dots, x) \geq x
  \]
  for all $x \in P$. Consequently, $x \mapsto t(x, \dots, x)$ is in the same connected component of $P^P$ as the identity. 
  Since $P$ is ramified this implies that $t(x, \dots, x) = x$ by Lemma~\ref{lem:aloneComponent}.
  Second, we claim that $t$ is not a projection. This is best argued by contradiction: If $t$ was the $i$-th projection, we would have
  \[
    x_{\alpha_i(i)}
    = t(x_{\alpha_i(1)}, \dots, x_{\alpha_i(n)})
    \leq s_i(x_1, x_2)
    \geq t(x_{\alpha_i(1)}, \dots, x_{\alpha_i(n)})
    = x_{\beta_i(i)},
  \]
  and, since $\alpha_i(i) \neq \beta_i(i)$, the two projections $\geom P^2 \to \geom P$ would be homotopic and $P$ would be contractible by Lemma~\ref{lem:projection-contractible}.
  We may then conclude that $P$ admits an idempotent monotone binary operation $f$ which is not a projection by Lemma~\ref{lem:larose-thm2}.

  Observe that Lemma~\ref{lem:pouzet} implies that no slices of $f$ are onto: If a slice $e(x) = f(a, x)$ is not onto, then $f(y, x) = e(x)$ for all $y$ by the lemma. In particular, $e(x) = f(x, x) = x$, and hence $f$ is the second projection. The other case is symmetric.

  Consider the homomorphism $f_* \colon \pi_d(\geom P)^\arityf \to \pi_d(\geom P)$, and observe that it is idempotent since $f$ is (see also the proof of Lemma~\ref{lem:taylor}). Furthermore, $\pi_d(\geom P)$ is abelian either by Lemma~\ref{lem:taylor}, if $d = 1$, or since higher homotopy groups are abelian, if $d \geq 2$. We will write the group operation of $\pi_d(\geom P)$ additively.
  Let $r_1(u) = f_*(u, 0)$, $r_2(u) = f_*(0, u)$, and observe that this defines elements of $\End(\pi_d(\geom P))$.
  Since $f_*$ is a group homomorphism, we have
  \[
    f_*(u_1, u_2)
    = f_*(u_1, 0) + f_*(0, u_2)
    = r_1(u_1) + r_2(u_2).
  \]
  Moreover, since $f_*$ is idempotent, we get that
  \(
    u = f_*(u, u) = r_1(u) + r_2(u)     
  \)
  for all $u \in \pi_d(\geom P)$, and hence $r_1 + r_2 = 1$.

  \begin{claim}
    Both $r_1$ and $r_2$ are nilpotent.
  \end{claim}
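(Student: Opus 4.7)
The plan is to realize $r_1$ as the map on $\pi_d$ induced by a suitable slice of $f$, iterate that slice inside the finite monoid of monotone self-maps of $P$ until it becomes idempotent (and hence a retraction onto a proper retract $R_1$), and then invoke the minimality of $P$ to conclude that $r_1^N = 0$. The argument for $r_2$ will be entirely symmetric.

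Concretely, I would set $e_1(x) = f(x, p_0)$. Idempotency of $f$ gives $e_1(p_0) = p_0$, so $e_1\colon P \to P$ is monotone and fixes the basepoint; unpacking the definitions (and using $\pi_d(\geom{P^2}, (p_0, p_0)) \cong \pi_d(\geom P, p_0)^2$) gives $(e_1)_* = r_1$. By the observation immediately before the claim no slice of $f$ is onto, so in particular $e_1(P) \subsetneq P$.

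Because $P$ is finite, the descending chain $e_1(P) \supseteq e_1^2(P) \supseteq \dots$ stabilises on some $T$, on which $e_1|_T$ is a bijection of finite order; a suitable multiple of that order yields $N \geq 1$ with $e_1^N$ idempotent. Setting $R_1 = e_1^N(P)$, we get $R_1 \subseteq e_1(P) \subsetneq P$ and a factorisation $e_1^N = \iota \circ \rho$ with $\rho\colon P \to R_1$ a retraction and $\iota\colon R_1 \hookrightarrow P$ the inclusion; moreover $p_0 \in R_1$ since $e_1$ fixes it. The retract $R_1$ is connected (monotone surjections preserve poset-connectedness, and $P$ is connected by the reductions at the start of the proof) and admits a sub-Taylor polymorphism by Lemma~\ref{lem:retract}. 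As it is strictly smaller than the minimal counterexample $P$, the induction hypothesis forces $\pi_d(\geom{R_1}, p_0) = 0$. Consequently $r_1^N = (e_1^N)_* = \iota_* \circ \rho_*$ factors through the trivial group, so $r_1^N = 0$.

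The substantive step is identifying the right self-map: once $r_1$ is exhibited as $(e_1)_*$ for the slice $e_1$, nilpotency follows mechanically from the failure of surjectivity (Lemma~\ref{lem:pouzet}) combined with minimality. The main care is in checking that the iterated-slice retract inherits all the ingredients needed to apply the minimality hypothesis simultaneously---being strictly smaller, connected, containing the basepoint, and carrying a sub-Taylor polymorphism---each routine but all required.
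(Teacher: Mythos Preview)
Your argument is correct and matches the paper's: realize $r_i$ as $(e_i)_*$ for a non-surjective slice of $f$, iterate inside the finite monoid $P^P$ to obtain a retraction onto a proper retract, and invoke minimality together with Lemma~\ref{lem:retract} to conclude $(e_i^N)_* = 0$. The only cosmetic differences are that the paper plugs in an arbitrary $a_2$ rather than specifically $p_0$ (constant loops being trivial since $\geom P$ is connected) and does not verify connectedness of the retract, since the induction hypothesis already gives $\pi_d(\geom{R_1}, q) = 0$ for every basepoint~$q$.
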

  \begin{proof}[Proof of the claim]
    We prove that $r_1$ is nilpotent, the nilpotency of $r_2$ then follows by symmetry.
    Consider any fixed $a_2 \in P$ and $e\colon P \to P$ defined by
    \[
      e(x) = f(x, a_2).
    \]
    Observe that $r_1(u) = f_*(u, 0) = e_*(u)$ for all $u$.
    This is since $\nothing \mapsto a_2$ is a trivial loop in $\pi_d(\geom P)$. Consequently $r_1 = e_*$.

    Since $P$ is finite, there exist $k \geq 1$ such that $e^k = e^{2k}$, and hence $e^k$ is a retraction. Its image is a proper retract $R$ of $P$ since $e$ is not onto (it is a slice of $f$).
    Finally, $(e_*)^k = (e^k)_* = 0$ since $R$ has a sub-Taylor polymorphism by Lemma~\ref{lem:retract}, and hence $\pi_d(\geom R) = 0$ since $P$ is a minimal counterexample.
  \end{proof}

  We claim that $r_2$ is invertible. This is since $r_1$ is nilpotent, hence there is $k$ such that $r_1^k = 0$, and consequently,
  \[
    r_2 (1 + r_1 + r_1^2 + \dots + r_1^{k-1}) =
    (1 - r_1)(1 + r_1 + r_1^2 + \dots + r_1^{k-1}) =
    1 - r_1^k = 1.
  \]
  Finally, since any power of an invertible element is invertible, we get that $0$ is invertible which is only possible in the trivial ring.
\end{proof}

The above theorem states that a finite poset with a sub-Taylor polymorphism has the same homotopy as a finite discrete set. Consequently, by combining this with the Whitehead theorem, we get that the geometric realization of such a poset is homotopy equivalent to a discrete set.

\begin{corollary} \label{cor:sub-TaylorContractible}
  If $P$ is a finite poset with a sub-Taylor polymorphism, then every connected component of $\lvert P \rvert$ is contractible.
\end{corollary}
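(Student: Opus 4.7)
The plan is to deduce contractibility from Theorem~\ref{thm:larose-zadori} by invoking the Whitehead theorem, exploiting that $\geom P$ is a CW complex. Since $\geom P$ is the geometric realization of a finite simplicial complex, it is a CW complex, and each of its connected components is itself a connected CW subcomplex. Moreover, every connected component of $\geom P$ contains at least one vertex $p_0 \in P$, because the vertices of the order complex are the $0$-cells that generate the whole realization.

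I would then argue as follows. Fix a connected component $C$ of $\geom P$ and pick any vertex $p_0 \in P \cap C$. By Theorem~\ref{thm:larose-zadori} we have $\pi_d(\geom P, p_0) = 0$ for every $d \geq 1$, and since homotopy groups are a local invariant of the path component containing the basepoint, this yields $\pi_d(C, p_0) = 0$ for all $d \geq 1$. As $C$ is also path-connected by choice, the inclusion $\{p_0\} \hookrightarrow C$ induces an isomorphism on $\pi_d$ for every $d \geq 0$, hence it is a weak homotopy equivalence.

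By Whitehead's theorem for CW complexes (see, e.g., \cite[Theorem~4.5]{Hatcher02}), any weak homotopy equivalence between CW complexes is an actual homotopy equivalence. Applying this to the inclusion $\{p_0\} \hookrightarrow C$ shows that $C$ is homotopy equivalent to a point, i.e., contractible. Since $C$ was an arbitrary connected component of $\geom P$, the corollary follows.

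The argument is essentially routine once Theorem~\ref{thm:larose-zadori} is in hand; the only subtlety worth flagging is ensuring that every connected component of $\geom P$ really does contain a vertex of $P$ (so that the conclusion of the theorem can be applied there), which is immediate from the definition of the order complex.
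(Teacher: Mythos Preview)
Your proof is correct and follows essentially the same approach as the paper: both deduce contractibility from Theorem~\ref{thm:larose-zadori} via the Whitehead theorem, the only cosmetic difference being that the paper applies Whitehead once to the map $\geom P \to \pi_0(\geom P)$ onto the discrete set of components, whereas you apply it component by component to the inclusion $\{p_0\}\hookrightarrow C$.
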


\begin{proof}
  Let $C = \pi_0(\lvert P\rvert)$ be the set of connected components of $P$, and let $f\colon \lvert P \lvert \to C$ be the map that maps each point to its connected component. The map $f$ then induces isomorphisms on homotopy since $f_*\colon \pi_0(\lvert P \rvert) \to \pi_0(C)$ is bijective by definition, and $f_*\colon \pi_d(\lvert P\lvert, x_0) \to \pi_d(C, f(x_0))$ is trivial for all $d \geq 1$.
  By the Whitehead theorem \cite[Theorem~4.5]{Hatcher02} we get that $f$ is a homotopy equivalence, and hence has a homotopy inverse $g$. In particular, each component $c \in C$ contracts to $g(c)$.
\end{proof}

Although we do not need it in the proof, let us mention here that using the close connection between posets and simplicial complexes, we may obtain a version of Theorem~\ref{thm:larose-zadori} for simplicial complexes which says that finite simplicial complexes can either have non-trivial homotopy, or non-trivial algebraic properties but not both. See also \cite[Theorem~5.17]{Meyer24}.

\begin{corollary}
    If $\complex A$ is a finite simplicial complex with a compatible simplicial Taylor operation $f\colon \complex A^n\to \complex A$, then every connected component of $\complex A$ is contractible.
\end{corollary}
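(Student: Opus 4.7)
The plan is to translate the statement to one about the face poset of $\complex A$ and then invoke Corollary~\ref{cor:sub-TaylorContractible}. Let $P(\complex A)$ denote the poset of non-empty faces of $\complex A$ ordered by inclusion. Its order complex is the barycentric subdivision of $\complex A$, so $\geom{P(\complex A)}$ is homeomorphic to $\geom{\complex A}$; in particular the two spaces share connected components and homotopy type, and it suffices to prove contractibility of the components of $\geom{P(\complex A)}$.

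Given the simplicial Taylor operation $f\colon \complex A^n \to \complex A$, I would define an operation $\tilde f\colon P(\complex A)^n \to P(\complex A)$ by
\[
  \tilde f(F_1, \dots, F_n) = \{f(v_1, \dots, v_n) \mid v_1 \in F_1, \dots, v_n \in F_n\}.
\]
Under the usual convention for the product of simplicial complexes (where $F_1 \times \dots \times F_n$ is a face of $\complex A^n$ whenever each $F_i$ is a face of $\complex A$), simpliciality of $f$ guarantees that $\tilde f(F_1, \dots, F_n)$ is a face of $\complex A$; and $\tilde f$ is clearly monotone in each coordinate with respect to inclusion.

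The next step is to verify that $\tilde f$ is a sub-Taylor polymorphism of $P(\complex A)$. Idempotency of $f$ on vertices immediately gives $\tilde f(F, \dots, F) \supseteq F$, i.e., $\tilde f(F, \dots, F) \geq F$ in $P(\complex A)$. For the sub-Taylor identities, the vertex-level equations $f(x_{\alpha_i(1)}, \dots, x_{\alpha_i(n)}) = f(x_{\beta_i(1)}, \dots, x_{\beta_i(n)})$ that hold for all $x_1, x_2 \in V(\complex A)$ lift to genuine equalities of faces $\tilde f(F_{\alpha_i(1)}, \dots, F_{\alpha_i(n)}) = \tilde f(F_{\beta_i(1)}, \dots, F_{\beta_i(n)})$, because both sides are the image of $F_1 \times F_2$ under the same elementwise map. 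I can therefore set $s_i$ to equal either of these two monotone maps.

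Applying Corollary~\ref{cor:sub-TaylorContractible} to $P(\complex A)$ now yields that every connected component of $\geom{P(\complex A)}$ is contractible, and transferring back along the homeomorphism $\geom{P(\complex A)} \cong \geom{\complex A}$ finishes the argument. The only real point requiring care is fixing a precise convention for the simplicial product $\complex A^n$ so that $\tilde f$ genuinely lands in $P(\complex A)$; once that is in place, the reduction to the poset case is essentially automatic, and no additional topological work is needed beyond what is already packaged in Corollary~\ref{cor:sub-TaylorContractible}.
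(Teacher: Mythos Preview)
Your approach is exactly the paper's: pass to the face poset, push $f$ forward to an operation $\tilde f$ on faces, check the sub-Taylor conditions, and invoke Corollary~\ref{cor:sub-TaylorContractible}. There is, however, one incorrect step. You claim that the Taylor identities lift to \emph{equalities}
\[
  \tilde f(F_{\alpha_i(1)}, \dots, F_{\alpha_i(n)}) = \tilde f(F_{\beta_i(1)}, \dots, F_{\beta_i(n)})
\]
on the grounds that ``both sides are the image of $F_1 \times F_2$ under the same elementwise map.'' That is not so: by definition $\tilde f(F_{\alpha_i(1)}, \dots, F_{\alpha_i(n)})$ is the image of the full product $F_{\alpha_i(1)} \times \dots \times F_{\alpha_i(n)}$ under $f$, where coordinates at positions with the same label $\alpha_i(j)$ may vary independently, whereas the image of $F_1 \times F_2$ under $(x_1,x_2) \mapsto f(x_{\alpha_i(1)}, \dots, x_{\alpha_i(n)})$ forces those coordinates to coincide. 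For a concrete failure, take the identity $f(x,x,y) = f(y,x,y)$ with $F_1 = \{a,b\}$, $F_2 = \{c\}$: then $\tilde f(F_1,F_1,F_2)$ contains $f(a,b,c)$, which need not lie in $\tilde f(F_2,F_1,F_2) = \{f(c,a,c), f(c,b,c)\}$. Consequently, setting $s_i$ equal to one of the two sides does not in general give the other sub-Taylor inequality.

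The repair is exactly what the paper does (cf.\ the argument in Lemma~\ref{lem:mhom-sub-Taylor}): take $s_i'(F_1,F_2)$ to be the diagonal image $\{f(x_{\alpha_i(1)}, \dots, x_{\alpha_i(n)}) \mid x_1 \in F_1,\ x_2 \in F_2\}$. This set \emph{is} contained in both $\tilde f(F_{\alpha_i(1)}, \dots)$ and $\tilde f(F_{\beta_i(1)}, \dots)$, and that containment is precisely the sub-Taylor inequality required. You in fact named this set yourself; the only slip is claiming equality where only containment holds.
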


\begin{proof}
    Consider the face poset $P$ of $\complex A$ and observe that the monotone function $f'\colon P^n \to P$ defined by
    \[
    f' (F_1, \dots, F_n) = \{f (v_1, \dots, v_n) \mid v_1 \in F_1, \dots, v_n \in F_n \}
    \]
    is a sub-Taylor polymorphism; we prove this in more detail in an analogous case in Lemma~\ref{lem:mhom-sub-Taylor} below.
    By Corollary~\ref{cor:sub-TaylorContractible}, every component of $\geom P$ is contractible. Since $\geom P$ is the barycentric subdivision of $\complex A$, and hence it is homotopy equivalent to $\complex A$, the same holds for $\complex A$.
\end{proof}

\section{Proof of the Hell–Nešetřil theorem}
\label{section:HellNesetril}

In this section, we prove Theorem~\ref{thm:hell-nesetril}. We start with proving the following lemma that allows us to use Corollary~\ref{cor:sub-TaylorContractible}.

\begin{lemma} \label{lem:mhom-sub-Taylor}
  If a graph $H$ has a Taylor polymorphism, then $\mhom(G, H)$ has a sub-Taylor polymorphism for all graphs $G$.
\end{lemma}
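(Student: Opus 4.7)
The plan is to lift $t$ componentwise to a monotone operation $T$ on $\mhom(G, H)$ and show it, together with operations $S_i$ obtained from the Taylor minors of $t$, realises the sub-Taylor axioms. Given a Taylor polymorphism $t\colon H^n \to H$, define $T\colon \mhom(G,H)^n \to \mhom(G,H)$ by
\[
  T(f_1, \dots, f_n)(v) = \{t(a_1, \dots, a_n) \mid a_1 \in f_1(v), \dots, a_n \in f_n(v)\}
\]
for each $v \in V(G)$. This set is nonempty since each $f_i(v)$ is, and one checks the multihomomorphism condition at an edge $(u,v) \in E(G)$ directly: any pair in $T(f_1, \dots, f_n)(u) \times T(f_1, \dots, f_n)(v)$ has the form $(t(a_1, \dots, a_n), t(b_1, \dots, b_n))$ with $(a_i, b_i) \in E(H)$ for each $i$, and the polymorphism property of $t$ yields the needed edge. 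Monotonicity of $T$ is immediate from the definition, and the sub-idempotency $T(f, \dots, f) \geq f$ follows from idempotency of $t$, since each $a \in f(v)$ equals $t(a, \dots, a) \in T(f, \dots, f)(v)$.

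For each $i \in [n]$, the Taylor identity provides a binary minor $s_i\colon H^2 \to H$ defined by
\[
  s_i(x_1, x_2) = t(x_{\alpha_i(1)}, \dots, x_{\alpha_i(n)}) = t(x_{\beta_i(1)}, \dots, x_{\beta_i(n)}),
\]
which is itself a polymorphism of $H$. I would then define $S_i\colon \mhom(G,H)^2 \to \mhom(G,H)$ in the analogous componentwise fashion,
\[
  S_i(f_1, f_2)(v) = \{s_i(a_1, a_2) \mid a_1 \in f_1(v),\; a_2 \in f_2(v)\},
\]
and verify that it is a well-defined monotone map by the same argument as for $T$. The sub-Taylor inequality $S_i(f_1, f_2) \leq T(f_{\alpha_i(1)}, \dots, f_{\alpha_i(n)})$ then follows because, given $a_1 \in f_1(v)$ and $a_2 \in f_2(v)$, the choices $a_k' := a_{\alpha_i(k)} \in f_{\alpha_i(k)}(v)$ give $t(a_1', \dots, a_n') = s_i(a_1, a_2)$; the analogous argument with $\beta_i$ gives the other inequality.

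The conceptual point, rather than a technical obstacle, is why only inequalities appear: the componentwise extension $T$ evaluated on the $\alpha_i$-pattern $(f_{\alpha_i(1)}, \dots, f_{\alpha_i(n)})$ allows the coordinates to be chosen \emph{independently} from each $f_{\alpha_i(k)}(v)$, whereas $S_i(f_1, f_2)(v)$ only uses correlated choices coming from a single pair $(a_1, a_2) \in f_1(v) \times f_2(v)$. Hence the Taylor equalities in $H$ do not lift to equalities in $\mhom(G, H)$, but they do lift to the two inequalities required by the sub-Taylor definition, witnessed by the operations $S_i$.
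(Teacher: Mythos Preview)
Your proof is correct and follows essentially the same approach as the paper: both lift a polymorphism $f$ of $H$ to a monotone map on $\mhom(G,H)$ via the componentwise image $f'(m_1,\dots,m_n)(v)=\{f(h_1,\dots,h_n)\mid h_i\in m_i(v)\}$, take $s_i$ to be the binary minors of $t$ determined by the Taylor identities, and verify the sub-Taylor inequalities by the same containment argument you give. Your write-up is in fact slightly more detailed than the paper's (you explicitly check well-definedness, monotonicity, and sub-idempotency, and you explain conceptually why only inequalities survive the lift).
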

\begin{proof}
  Every polymorphism $f\colon H^n\to H$ induces a monotone map $f'\colon \mhom(G, H)^n \to \mhom(G, H)$ defined by
  \[
      f'(m_1,\dots,m_n) = (v \mapsto \{f(h_1, \dots, h_n)\mid h_i\in m_i(v) \text{ for all $i\in [n]$} \}).
  \]
  Let $t\colon H^n \to H$ be a Taylor polymorphism, and let $s_1, \dots, s_n\colon H^2\to H$ be maps defined by
  \[
    s_i(x_1, x_2)
    = t(x_{\alpha_i(1)}, \dots, x_{\alpha_i(n)})
    = t(x_{\beta_i(1)}, \dots, x_{\beta_i(n)})
  \]
  for all $i \in [n]$, where $\alpha_i$ and $\beta_i$ are the corresponding functions witnessing that $t$ is Taylor.
  We claim that $t'$, $s'_1$, \dots, $s'_n$ are sub-Taylor operations witnessed by the same $\alpha_i$'s and $\beta_i$'s:
  Let $i \in [n]$, $m_1$ and $m_2$ be arbitrary multihomomorphisms, and $v \in V(G)$. Then
  \begin{align*}
    t(m_{\alpha_i(1)}, \dots, m_{\alpha_i(n)})(v)
    &= \{ t(h_1, h_2, \dots, h_n) \mid
          h_1 \in m_{\alpha_i(1)}(v), \dots, h_n \in m_{\alpha_i(n)}(v) \} \\
    &\supseteq \{ t(h_{\alpha_i(1)}, \dots, h_{\alpha_i(n)}) \mid
                  h_1 \in m_1(v), h_2 \in m_2(v) \} \\
    &= \{ s(h_1, h_2) \mid
          h_1 \in m_1(v), h_2 \in m_2(v) \} \\
    &= s'_i(m_1, m_2)(v)
  \end{align*}
  and thus $t'(x_{\alpha_i(1)}, \dots, x_{\alpha_i(n)})\ge s'_i(x, y)$. The other two inequalities $t'(x_{\beta_i(1)}, \dots, y_{\beta_i(n)})\ge s'_i(x, y)$ and $t'(x, \dots, x)\ge x$ are proven similarly.
\end{proof}

Next we provide a new proof of the following theorem which was first obtained by Bulatov~\cite{Bulatov05}.

\begin{theorem} \label{thm:loop}
    Every non-bipartite graph $H$ that has a Taylor polymorphism has a self-loop.
\end{theorem}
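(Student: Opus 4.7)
The strategy is to assemble the ingredients already prepared in the paper. Set $P = \mhom(K_2, H)$. By Lemma~\ref{lem:mhom-sub-Taylor}, the assumption that $H$ has a Taylor polymorphism ensures that $P$ admits a sub-Taylor polymorphism. Since $P$ is finite, Corollary~\ref{cor:sub-TaylorContractible} applies and tells us that every connected component of the geometric realization $\geom P = \Hom(K_2, H)$ is contractible.

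Next, I would use the non-bipartiteness assumption to single out a suitable component. By Lemma~\ref{lem:edgeFlip}, there exists an edge $(u, v) \in E(H)$, viewed as a vertex of $\mhom(K_2, H)$, such that $(u,v)$ and $\auto((u,v)) = (v,u)$ lie in the same connected component of the poset $P$. Equivalently, this is a single connected component $C$ of $\Hom(K_2, H)$ containing both points. Because $\geom\auto$ is a homeomorphism of $\Hom(K_2, H)$ and therefore permutes connected components, and because it sends $(u,v) \in C$ to $(v,u) \in C$, it must send $C$ into itself. The restriction $\geom\auto|_C \colon C \to C$ is thus a continuous self-map of a contractible finite simplicial complex.

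The final step is to apply the Lefschetz fixed-point corollary (Theorem~\ref{thm:fixed-point}) to this restriction, yielding a fixed point of $\geom\auto|_C$, hence of $\geom\auto$ on all of $\Hom(K_2, H)$. Lemma~\ref{lem:loop} then translates this fixed point into a self-loop in $H$, completing the proof.

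I do not anticipate any real obstacle in carrying out this plan, because every component of the argument is isolated as a prior lemma; the only subtle point worth spelling out is why $\geom\auto$ genuinely restricts to a self-map of a single contractible component (and not merely permutes contractible components), which is precisely what Lemma~\ref{lem:edgeFlip} is designed to supply.
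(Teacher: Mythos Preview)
Your proposal is correct and follows essentially the same approach as the paper's proof: assemble Lemma~\ref{lem:mhom-sub-Taylor}, Corollary~\ref{cor:sub-TaylorContractible}, Lemma~\ref{lem:edgeFlip}, Theorem~\ref{thm:fixed-point}, and Lemma~\ref{lem:loop} in exactly this order. The paper's argument is the same, and your remark about why $\geom\auto$ restricts to a self-map of a single component is precisely the point the paper also makes.
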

\begin{proof}
    Consider the poset $\posetA\coloneqq \mhom(K_2,H)$.
    By Lemma~\ref{lem:mhom-sub-Taylor}, $\posetA$ has a sub-Taylor polymorphism and by Corollary~\ref{cor:sub-TaylorContractible}, every connected component of $\geom{\posetA}$ is contractible.

    By Lemma~\ref{lem:edgeFlip}, there is an edge $(u,v)$ which is connected to
    $\auto((u,v))$ in $\posetA$ and thus also in $\georeal{\posetA}$. Since $\georeal{\auto}$ respects connected components of $\georeal{\posetA}$ (as any continuous function does), it restricts to an automorphism of the connected component of $(u,v)$.
    Therefore, we can apply Theorem~\ref{thm:fixed-point} on this component and get that $\georeal{\auto}$ has a fixed point.
    Now, $H$ has a loop by Lemma~\ref{lem:loop}.
\end{proof}

We may now conclude the Hell--Nešetřil theorem.

\begin{proof}[Proof of Theorem~\ref{thm:hell-nesetril}]
Let $H$ be a graph. We may assume that $H$ is a core or replace it with an equivalent core graph \cite[Theorems~16–17]{BartoKW17}. We distinguish the following cases:
\begin{description}
    \item[$H$ is bipartite] Then, $H$ is empty, nonempty without edges, or bipartite with edges. In the first two cases, a graph admits an $H$-colouring if and only if its vertex-set, or edge-set is empty, respectively. In the third case, a graph admits an $H$-colouring if and only if it is bipartite (this is since a graph is bipartite if and only if it allows a homomorphism to an edge, and an edge maps homomorphically to $H$).
    Either of these decisions is in P.
    \item[$H$ has a loop] Then every graph admits an $H$-colouring, and we can decide this in constant time.
    \item[$H$ is not bipartite and has no self-loop]
    By Theorem~\ref{thm:loop}, $H$ has no Taylor polymorphism. Therefore, $H$-colouring is NP-complete by Theorem~\ref{thm:bjk05}. \qedhere
\end{description}
\end{proof}

\section{Conclusion}

We discuss a few consequences of our results, and possible directions of future research and applications in the complexity of more general CSPs. 
We refer to \cite[Section~2]{BartoKW17} for definitions of CSPs and relational structures, and to \cite[Section~2.3]{FilakovskyNOTW24} for a general definition of the homomorphism complex.

Firstly, unlike previous proofs of the Hell--Nešetřil theorem, our proof does not rely on any specific graph-theoretical observations, and hence these methods can be used to provide a new general necessary criterion for tractability of finite-template CSPs.
In particular, we can prove that solution spaces of tractable finite-template CSPs are homotopy equivalent to discrete sets in the following sense.

\begin{corollary} \label{cor:necessary}
    Let $G$ and $H$ be two finite relational structures over the same signature such that $H$ has a Taylor polymorphism. Then, every connected component of $\Hom(G,H)$ is contractible.
\end{corollary}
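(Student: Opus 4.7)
The plan is to lift the argument used in Theorem~\ref{thm:loop} from graphs to arbitrary finite relational structures, bypassing the fixed-point step (which is specific to the flip action on $\Hom(K_2,H)$). All that is really needed is the sub-Taylor half of the proof: once we know $\mhom(G,H)$ carries a sub-Taylor polymorphism, Corollary~\ref{cor:sub-TaylorContractible} immediately gives contractibility of the connected components of its geometric realization, which is (by definition) $\Hom(G,H)$.

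The first step is to fix the general definitions. A multihomomorphism from $G$ to $H$ is a map $m\colon V(G)\to 2^{V(H)}\setminus\{\emptyset\}$ such that for every relation symbol $R$ of arity $k$ and every tuple $(v_1,\dots,v_k)\in R^G$ one has $m(v_1)\times\cdots\times m(v_k)\subseteq R^H$. These multihomomorphisms form a poset $\mhom(G,H)$ under componentwise inclusion, and $\Hom(G,H)$ is the geometric realization of its order complex, exactly as in the graph case.

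The second step is the relational-structure version of Lemma~\ref{lem:mhom-sub-Taylor}: if $t\colon H^n\to H$ is a Taylor polymorphism witnessed by $\alpha_i,\beta_i\colon[n]\to[2]$, then the pointwise rule
\[
  t'(m_1,\dots,m_n)(v)=\{\,t(h_1,\dots,h_n)\mid h_i\in m_i(v)\text{ for all }i\in[n]\,\}
\]
yields a monotone operation on $\mhom(G,H)$, and the corresponding $s_i'$'s built from
\[
  s_i(x_1,x_2)=t(x_{\alpha_i(1)},\dots,x_{\alpha_i(n)})=t(x_{\beta_i(1)},\dots,x_{\beta_i(n)})
\]
form a sub-Taylor system for $t'$. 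The verification of well-definedness uses only that $t$ preserves each relation $R^H$ of $H$: if $(v_1,\dots,v_k)\in R^G$ then $m_j(v_1)\times\cdots\times m_j(v_k)\subseteq R^H$ for each $j$, and applying $t$ coordinatewise keeps us inside $R^H$. The sub-Taylor inequalities are then proved verbatim as in Lemma~\ref{lem:mhom-sub-Taylor}: replacing $\{h_1\in m_{\alpha_i(1)}(v),\dots\}$ by the smaller set $\{h_{\alpha_i(1)}\colon h_1\in m_1(v),h_2\in m_2(v)\}$ gives the inclusion $t'(m_{\alpha_i(1)},\dots,m_{\alpha_i(n)})(v)\supseteq s_i'(m_1,m_2)(v)$, and the other inequalities are analogous.

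The third and final step is to apply Corollary~\ref{cor:sub-TaylorContractible} to the finite poset $\mhom(G,H)$, which produces the conclusion that every connected component of $\geom{\mhom(G,H)}=\Hom(G,H)$ is contractible. There is no serious obstacle here: the only content beyond the graph case is checking that the coordinatewise construction $t'$ still preserves the multihomomorphism condition for arbitrary relations, and this is an immediate consequence of $t$ being a polymorphism of each relation of $H$. In particular, no analogue of the flip action or Lemma~\ref{lem:loop} is required.
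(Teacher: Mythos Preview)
Your proposal is correct and follows precisely the same route as the paper: generalise Lemma~\ref{lem:mhom-sub-Taylor} to arbitrary relational structures and then invoke Corollary~\ref{cor:sub-TaylorContractible}. The paper's own proof is a two-sentence pointer to exactly this argument, so your write-up is in fact a fleshed-out version of what the authors sketch.
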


\begin{proof}
  The proof in the general case is essentially identical to the proof for graphs. In particular, we may argue as in Lemma~\ref{lem:mhom-sub-Taylor} to show that $\mhom(G,H)$ allows a monotone sub-Taylor operation, which gives the required by Corollary~\ref{cor:sub-TaylorContractible}.
\end{proof}

This corollary can be extended to a full topology dichotomy of all finite-template CSPs along the lines of Schnider and Weber \cite{SchniderW24}; the hardness side and more details of the tractability side are presented by Meyer~\cite{Meyer24}.

Secondly, and similarly to proofs of Bulatov \cite{Bulatov05} and Siggers \cite{Siggers10}, our proof has some algebraic consequences. In particular, it can be used in place of Bulatov's or Siggers' proof in the proof that every finite structure with a Taylor polymorphism also has a 6-ary \emph{Siggers polymorphism} which satisfies the equations $s(x, y, z, x, y, z) = s(y, x, z, x, z, y)$. Furthermore, our method appears to be well-suited to provide other similar algebraic consequences. For example, can we provide a new topological proof that finite structures with Taylor polymorphisms have \emph{cyclic terms}?

Finally, a natural continuation of this work is a generalisation from CSPs to promise CSPs. In particular, we may ask: `How does the necessary condition from tractability of finite-template CSPs (provided by Corollary~\ref{cor:necessary}) generalise to finite-template promise CSPs?'
 
\bibliographystyle{alphaurl}
\newcommand{\etalchar}[1]{$^{#1}$}

\end{document}